\documentclass[12pt, a4paper]{article}

\usepackage[margin=2.3cm]{geometry}
\usepackage{amsmath, amssymb, amsthm}
\usepackage{graphicx}
\usepackage{booktabs}
\usepackage{natbib}      % for author–year citations
\usepackage{setspace}
\usepackage{bbm}
\usepackage{algorithm} 
\usepackage{algpseudocode}
\usepackage{multirow}
\newcommand*{\E}{\mathbb{E}}
\newcommand\floor[1]{\lfloor #1 \rfloor}
\newcommand*{\Q}{\widehat{\mathbb Q}}
\DeclareMathOperator*{\argmax}{argmax}
\newtheorem{theorem}{Theorem}[section]
\newtheorem{corollary}{Corollary}[section]
\newtheorem{lemma}{Lemma}

\usepackage{hyperref}

\title{\bf Detecting Parameter Instabilities in Functional Concurrent Linear Regression}

\author{
    {\large Rupsa Basu$^{1}$ \quad and \quad Sven Otto$^{1,2}$}
}

\date{\today}

\begin{document}
\begin{singlespace}     % single-space the title page itself
\maketitle
\footnotetext[1]{University of Cologne, Institute of Econometrics and Statistics, Albertus-Magnus-Platz, 50923 Köln, Germany}
\footnotetext[2]{Corresponding author. Email: sven.otto@uni-koeln.de}

\end{singlespace}
\onehalfspacing     % restore double-spacing for the body
\begin{abstract}
We develop methodology to detect structural breaks in the slope function of a concurrent functional linear regression model for functional time series in $C[0,1]$. Our test is based on a CUSUM process of regressor-weighted OLS residual functions. To accommodate both global and local changes, we propose $L^2$- and sup-norm versions, with the sup-norm particularly sensitive to spike-like changes. Under Hölder regularity and weak dependence conditions, we establish a functional strong invariance principle, derive the asymptotic null distribution, and show that the resulting tests are consistent against a broad class of alternatives with breaks in the slope function. Simulation studies illustrate finite-sample size and power. We apply the method to sports data obtained via body-worn sensors from running athletes, focusing on hip and knee joint-angle trajectories recorded during a fatiguing run. As fatigue accumulates, runners adapt their movement patterns, and sufficiently pronounced adjustments are expected to appear as a change point in the regression relationship. In this manner, we illustrate how the proposed tests support interpretable inference for biomechanical functional time series.
\bigskip

\noindent
\textit{Keywords:} Biomechanical gait data, Change point detection, Functional time series, Strong invariance principle, Varying-coefficient model

\end{abstract}

\newpage

\section{Introduction}

\textbf{FTSA.} Functional time series analysis (FTSA) lies at the intersection of functional data analysis (FDA) and time series analysis, combining methods from functional spaces with time-dependent data. A functional time series is a sequence $\{X_i(t)\}_{i \in \mathbb Z }$, where for each time index $i \in \mathbb Z$, a real-valued function is defined on the domain $t \in [0,1]$, which is often interpreted as an internal time domain. In this manner, functional time series extend traditional time series $\{X_i\}_{i \in \mathbb Z }$ by incorporating a second index $t$ that captures internal structure, patterns, or cycles within each time period indexed by $i$.
This representation is particularly useful in applications where the continuum over $t$ for each observation $i$ is best regarded as a single element of a function space. See Figure~\ref{fig:snapshotData} for an illustration of sports data from running, where each cycle is indexed by $i$, and within-cycle variation occurs over $t \in [0,1]$. The workhorse space in FDA is the Hilbert space $L^2 [0,1]$ and classical tools such as covariance operators, spectral decompositions, and principal component analysis are naturally formulated in this Hilbert space setting. The statistical framework of functional data analysis (FDA) is a well-developed field in which various properties of such data as well as inference techniques for the same have been studied extensively. Comprehensive introductions to the FDA literature can be found in \cite{ramsay2002applied}, \cite{ferraty2006nonparametric}, \cite{horvath2012inference}, and \cite{wang2016functional}. 

\begin{figure}[htp]
    \centering
    \includegraphics[width=\linewidth]{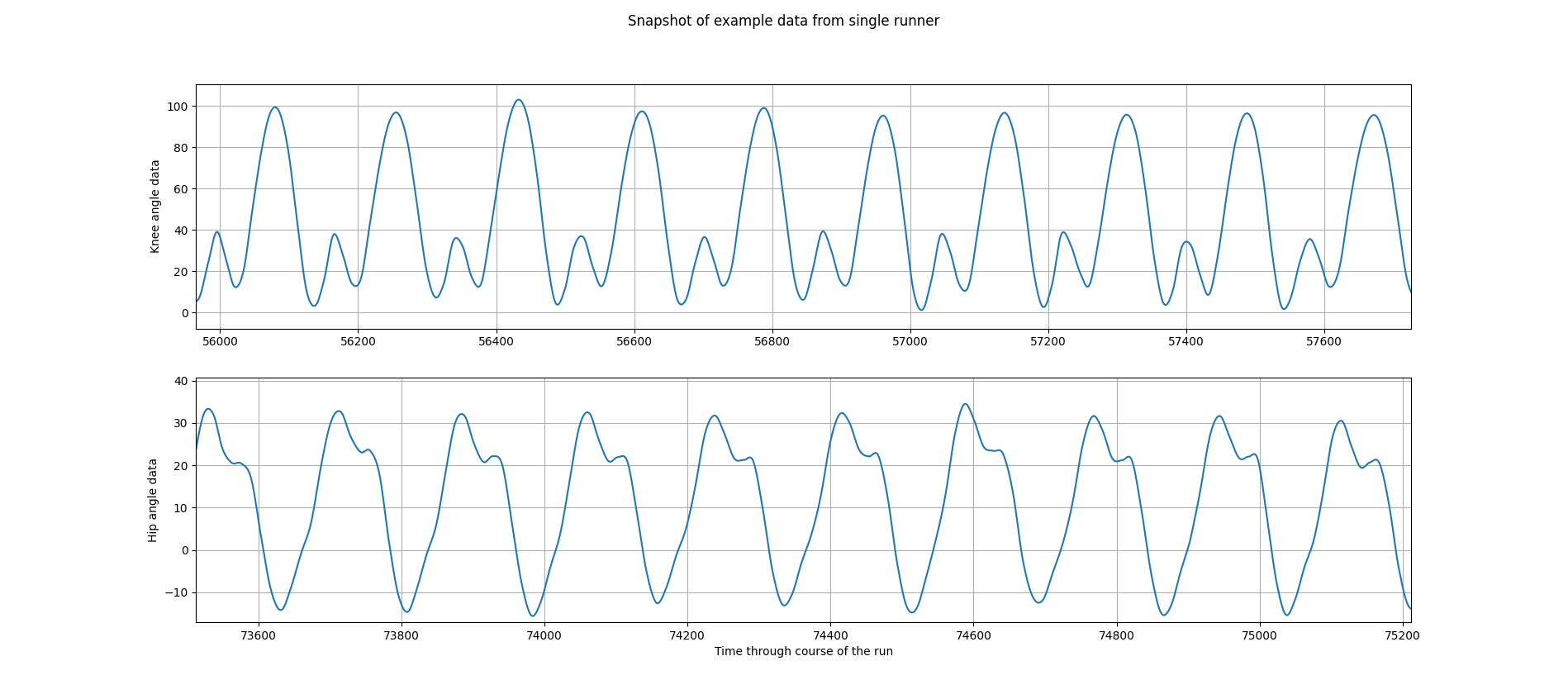}
    \caption{\textit{Top panel:} A snapshot of a few cycles from the knee angle data, denoted hereafter by $\{X_i (t)\}_{i \in \mathbb Z }$. \textit{Bottom panel:} Hip angle data, denoted by $\{Y_i (t)\}_{i \in \mathbb Z }$, for a single runner, where $i \in \mathbb Z $ denotes cycles and $t$ captures the internal cyclic structure within a single gait cycle.}
    \label{fig:snapshotData}
\end{figure}

\medskip

\noindent
\textbf{Going beyond $L^2$.} 
As an alternative to the popular $L^2[0,1]$ framework, we work in the separable Banach space of continuous functions $C[0,1]$, which permits access to well-defined pointwise values. We do this because $L^2$-distances can sometimes be small even for curves with markedly different shapes. Sharp spikes and impulses may not be captured well in $L^2$ as a result of integrating over the continuum $t$, whereas the sup-norm distinguishes them well. This has motivated recent developments in inference problems for FDA, including two-sample tests for equality of mean functions in the sup-norm (\citealt{dette2020functional}, \citealt{bastian2024multiplechangepointdetection}) and covariance functions (\citealt{dette2022detecting}), often in the context of change point detection.
This setting poses theoretical challenges because much of the existing FDA methodology is developed in the Hilbert space $L^2[0,1]$ and relies on $L^2$-geometry, which does not directly extend to $C[0,1]$ equipped with the sup-norm.
We refer the interested reader to Section~2 of \cite{dette2020functional} and the references therein for background on $C[0,1]$-valued random variables.
Further, $C[0,1]$-valued random variables can be viewed as jointly measurable stochastic processes on $[0,1]$ (see \citealt{Hsing_2015_Book}, Section~7). Accordingly, for such a random element $X$ we can work simultaneously in its dual role: pointwise, where $X(t)$ is a well-defined random variable for each $t\in[0,1]$, and functionally, where $X$ is an $L^2[0,1]$-valued random element.
Importantly, the $C[0,1]$ and $L^2 [0,1]$ frameworks are compatible because $C[0,1]$ embeds continuously into $L^2 [0,1]$ so boundedness in the sup-norm implies boundedness in $L^2$-norm and the standard Hilbert-space machinery remains available. In particular, if a $C[0,1]$-valued random variable has uniformly bounded finite moments, then its covariance function is finite and continuous on $[0,1]^2$. Consequently, by Mercer's theorem, the associated $L^2$-integral operator admits an orthonormal basis of continuous eigenfunctions and the Karhunen--Lo\`eve expansion holds in its natural $L^2$ sense.
In this paper, we make use of this unifying perspective and develop test statistics based on both the $L^2$- and the sup-norm: while the $L^2$-norm emphasizes average behavior of the functional curves, the sup-norm is sensitive to uniform deviations and localized departures. These differences are reflected in our simulations and in the wearable-sensor application, where fatigue-related adaptations may appear as sharp impulses before average deviations become pronounced.

\medskip

\noindent
\textbf{Our setup.} 
Our contribution focuses on change point inference problems for concurrent functional regression models. More precisely, we study a bivariate functional time series $\{X_i (t), Y_i (t)\}_{i \in \mathbb Z }$, where $X_i, Y_i \in C[0,1]$ are modeled using the concurrent linear regression framework,
\begin{align}
    Y_i (t) = \alpha_i (t) + \gamma_i (t) X_i (t) + \epsilon_i (t), \quad \quad i = 1, \dots, n, \ t \in [0,1], \label{flr_intro}
\end{align}
where $\epsilon_i \in C[0,1]$ denotes a mean zero functional regression error satisfying $\E (\epsilon_i (t) | X_i) = 0$ for all $t \in [0,1]$, and $\alpha_i, \gamma_i \in C[0,1]$ are intercept and slope parameter functions.
Such a concurrent regression model was first introduced by \cite{hastie1993varying} under the title \textit{varying coefficient model}, while the FDA literature later adopted the term concurrent functional linear model (\citealt{RS_2005_book}, \citealt{Zhang_et_al_2011}, \citealt{Petrovich_et_al_2023}). Related setups may be found in \cite{wu2000kernel} and \cite{csenturk2010functional}.

Change point detection in linear regression relationships for scalar time series data is well-known in the econometrics literature with contributions dating back to \cite{brown1975techniques}, \cite{kramer1988testing}, \cite{ploberger1992cusum}, \cite{andrews1993}, and \cite{bai1998}. In this work, we are interested in a functional extension of these classical approaches. The setting described in \eqref{flr_intro} is particularly relevant when the two-component functional time series is recorded simultaneously and potentially coupled. Such a concurrent linear regression model therefore provides a simple framework that takes into account the interdependent nature of the individual functional time series and captures the functional relationship between the dependent variable $Y_i(t)$ and the regressor function $X_i(t)$.

Our focus in this work is on testing the structural stability of the concurrent functional regression relationship over the time index $i = 1, \ldots, n$ against structural changes in the $C[0,1]$-valued slope parameter $\gamma_i$. In regression applications the slope function is typically the parameter of primary interest and therefore our tests are designed to have power against slope instability.
Under the standard exogeneity condition $\E (\epsilon_i (t) | X_i) = 0$ and mild moment conditions, $\alpha_i$ and $\gamma_i$ are identified via 
the implied population normal equations
\begin{align}
    \text{Cov} (X_i (t), Y_i (t)) &= \gamma_i (t) \text{Var} (X_i (t)), \label{eq:normal} \qquad
    \E(Y_i(t)) = \alpha_i (t) + \gamma_i (t)\, \E(X_i (t)).
\end{align}
The second relation in \eqref{eq:normal} linking $\alpha_i$ and $\gamma_i$ indicates that, if $\E(X_i (t)) \neq 0$ for some $t$, then changes in $\gamma_i$ across $i$ are generally accompanied by changes in $\alpha_i$.
Conversely, when the slope $\gamma_i$ is stable across $i$, variation in $\alpha_i$ corresponds to changes in the unconditional mean functions of $Y_i$ and $X_i$, which is a topic that has been extensively studied for functional time series (see \cite{aue2018detecting} for the $L^2$-norm and \cite{dette2020functional} for the sup-norm). We therefore focus on testing slope stability and do not address mean-change testing here.

\medskip

\noindent
\textbf{Practical motivation.} 
From a practical point of view, the methodology presented in this work is demonstrated on sports data collected from running athletes via body-worn sensors. As we will see, the data obtained from such sensors is recorded (fairly) continuously over time. In particular, gait movements like running involve repetitive motion, each cycle of which may be modeled as a realization of an underlying function; see Figure \ref{fig:snapshotData}. Moreover, the motion of lower-extremity joints is inherently interrelated, and we account for this by modeling the hip angle data as the dependent variable $Y_i(t)$ and the knee angle data as the regressor variable $X_i(t)$; see Figure \ref{fig:SlopeInterceptNull} for an illustration of $\gamma_i (t)$ and $\alpha_i (t)$ for this setting in a specific runner. In biomechanics or sports science, where movement patterns vary by individual, the representation of $\gamma_i (t)$ and $\alpha_i (t)$ allows personalized insights on the simultaneous movement of two joints; see, for example, the lower plots in Figure \ref{fig:SlopeInterceptNull}. For example, sharp dips in either the slope or intercept functions during parts of a cycle could suggest reduced coupling between hip and knee motion in certain phases of the gait cycle of a particular runner. Later in Section \ref{varyingcoeff-jointcoupling}, we demonstrate two examples, an experienced runner and a novice runner, and provide a discussion on the slope function relationship between their (right-) hip and knee angles.

\begin{figure}[t]
        \centering
        \includegraphics[width=\linewidth]{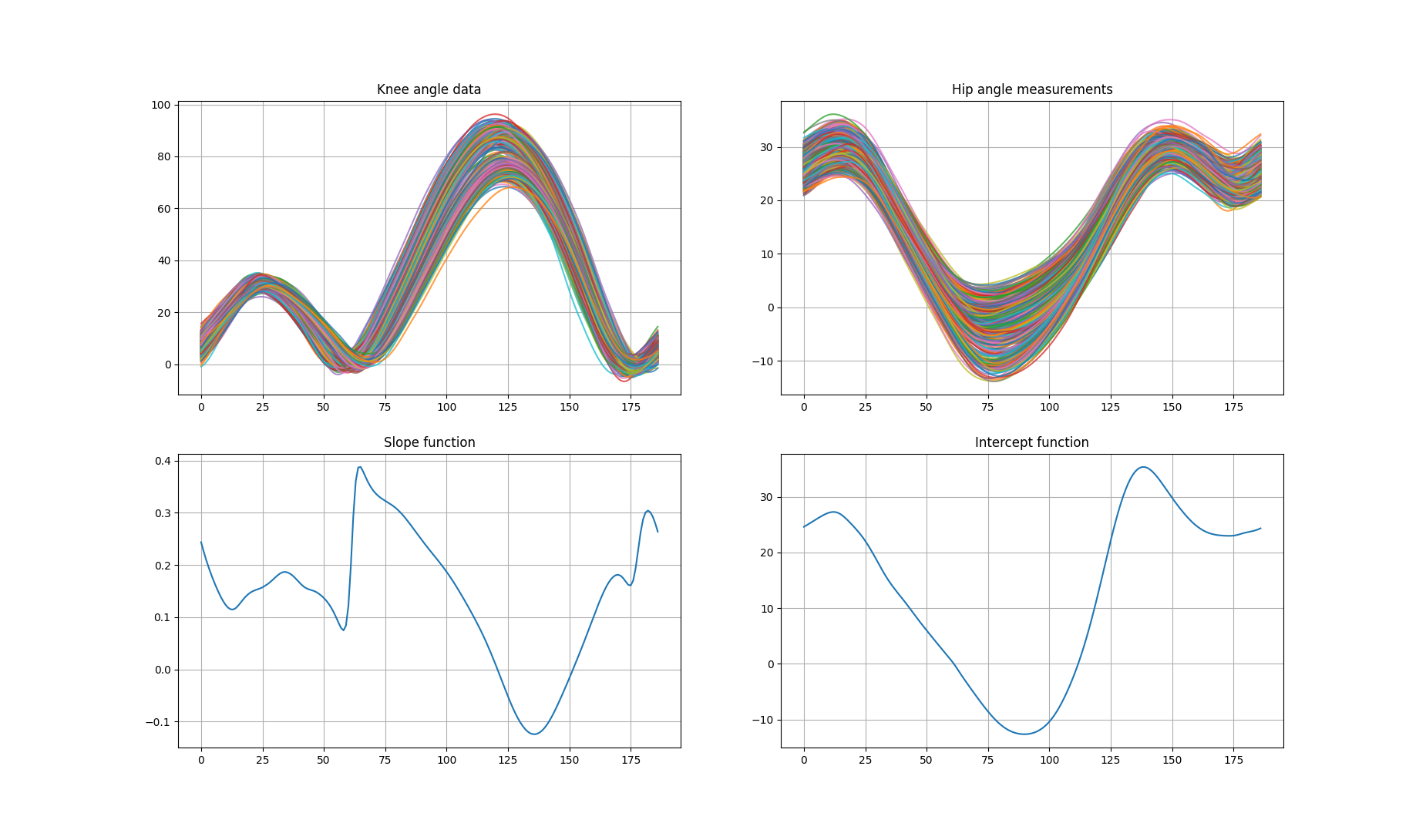}
        \caption{ \textit{Top panels:} knee angle data (left) and hip angle data (right) from a short run where fatigue does not set in. 
\textit{Bottom panels:} concurrent slope $\gamma_i(t)$ (left) and intercept $\alpha_i(t)$ (right) functions under a no-change setting.        
        Collectively, these plots visually demonstrate the coupling structure between the lower-extremity joint angles within the gait cycle for a specific runner.}
        \label{fig:SlopeInterceptNull}
\end{figure}

\section{Mathematical framework}

To formalize our testing framework, let $\alpha(z,t)$ and $\gamma(z,t)$ with $z,t \in [0,1]$ be the concurrent intercept and slope functions in rescaled time such that
$$
\alpha_i(t)=\alpha(i/n,t),\qquad \gamma_i(t)=\gamma(i/n,t), \qquad i=1, \ldots, n.
$$
The functional concurrent linear regression model \eqref{flr_intro} can then be written as
$$
    Y_i (t) = \alpha(i/n,t) + \gamma(i/n,t) X_i (t) + \epsilon_i (t), \quad \quad \E(\epsilon_i(t)|X_i) = 0, \quad \quad i = 1, \dots, n, \ t \in [0,1]. 
$$
We assume that $\alpha(z,t)$ and $\gamma(z,t)$ are continuous in $t\in[0,1]$ and piecewise Lipschitz in $z\in[0,1]$ uniformly over $t$, which allows for single and multiple breaks as well as gradual changes in the regression coefficients. 
We test the joint stability null hypothesis
\begin{align}
    \mathcal H:\quad \alpha(z,t)=\alpha_0(t)\ \text{and}\ \gamma(z,t)=\gamma_0(t),\qquad z,t\in[0,1]. \label{testproblem}
\end{align}
Under the alternative we allow both coefficients to vary, but our test is designed to have power against slope instability, that is,
$$
\mathcal K:\quad \exists\, z_1,z_2\in[0,1]\ \text{such that}\ \sup_{t \in [0,1]} |\gamma(z_1,t)-\gamma(z_2,t)| >0.
$$
In particular, $\mathcal K$ covers the classical single change point alternative $\gamma_i(t)=\gamma_0(t)+\gamma^*(t)\,\mathbbm 1\{i>k^*\}$ for some $k^*\in\{1,\ldots,n-1\}$ and some nonzero $\gamma^*\in C[0,1]$.
We allow $\alpha(z,t)$ to vary arbitrarily under $\mathcal K$, which accommodates intercept shifts that may be induced by changes in $\gamma(z,t)$ whenever $\E(X_i(t))\neq 0$.

Under the null hypothesis, the standard pointwise OLS estimates for the parameters $\gamma_0$ and $\alpha_0$ of the regression model for a sample $(X_1, Y_1), \dots, (X_n, Y_n)$ may be obtained as
\begin{align}
        \hat \gamma (t) &= \bigg(\sum_{i=1}^n ( X_i(t)- \hat\mu_X (t))^2 \bigg)^{-1} \bigg(\sum_{i=1}^n  \big(X_i (t)- \hat\mu_X (t) \big) \big( Y_i (t)- \hat\mu_Y (t)\big) \bigg), \label{slope_est}\\
    \hat \alpha (t) &  = \hat \mu_Y (t) - \hat \gamma (t) \hat \mu_X(t) \label{intercept_est}, \qquad \hat \mu_Y (t) = \frac{1}{n} \sum_{i=1}^n Y_i(t), \quad \hat \mu_X (t) = \frac{1}{n} \sum_{i=1}^n X_i(t).
\end{align}
Corresponding functional OLS residuals are given by
\begin{align}
    \hat \epsilon_i(t) = Y_i(t) - \hat \gamma (t) X_i(t) - \hat \alpha(t). \label{OLS_errors}
\end{align}
We propose the CUSUM-type test statistics using the OLS residuals as follows, 
\begin{align}
          \Q_n  (z, t) & = \frac{1}{\sqrt{n}} \sum_{i=1}^{\floor {nz}} \bigg ( X_i (t)- \hat\mu_X (t) \bigg ) \hat{\epsilon}_i (t), \quad \quad z, t \in [0,1] \label{cusum_slope}.
\end{align}

Note that cumulative sums of the OLS residuals $\hat\epsilon_i(t)$ alone can yield tests with no power in certain settings, for instance when the regressor is centered (see \citealt{jiang2019power} and \citealt{otto2022backward}).
To avoid these issues, we consider the modified CUSUM process \eqref{cusum_slope} using the weighted cumulative sums of $(X_i (t) - \hat \mu_X (t)) \hat \epsilon_i (t)$ to obtain tests that have power against generic breaks in $\gamma(z,t)$.
We propose two different test statistics:
\begin{align}
    \Q_{\text{sup}}  &:= \sup_{z\in[0,1]} \sup_{t\in[0,1]} \big| \Q_n  (z, t) \big|,  \label{q2_supRej002}\\
    	\Q_{L^2}  &:= \sup_{z\in[0,1]} \bigg( \int_0^1 \Q_n (z, t)^2 \,dt \bigg)^{1/2}. \label{q2_L2_Rej}
\end{align}
The decision rule of our test is that we reject $\mathcal H$ at significance level $\rho$ if $\Q_{\text{sup}} \ge q_{1-\rho,\text{sup}}$ or $\Q_{L^2} \ge q_{1-\rho,L^2}$, where $q_{1-\rho,\text{sup}}$ and $q_{1-\rho,L^2}$ denote the $(1-\rho)$ quantiles of the corresponding limiting null distributions.
The specific strengths and hence the power properties of each statistic are discussed later in Section \ref{sec:theoryContributions}. The choice of norm depends on the type of change of interest.
Finally, the CUSUM maximizers in each case of \eqref{q2_supRej002} and \eqref{q2_L2_Rej} are given by
$$
    \hat k_{\text{sup}} = \argmax_{1 \leq i \leq n} \sup_{t\in[0,1]} \big| \Q_n  (i/n, t) \big|, \qquad
    \hat k_{L^2} = \argmax_{1 \leq i \leq n} \bigg( \int_0^1 \Q_n (i/n, t)^2 \,dt \bigg)^{1/2}.
$$

\section{Theoretical contributions}
\label{sec:theoryContributions}

We develop our asymptotic analysis under the following set of assumptions:

\begin{itemize}
\item[(A1)] 
(Exogeneity) The functional time series $(X_i)_{i \in \mathbb Z}$ and $(\epsilon_i)_{i \in \mathbb Z}$ take values in $C[0,1]$ and satisfy $\E(\epsilon_i(t)|X_i) = 0$ for all $t \in [0,1]$, $i \in \mathbb Z$, and $\inf_{t \in [0,1]} \mathrm{Var}(X_i(t)) > 0$.
\item[(A2)] (Moment Bounds) For some Hölder exponent $\eta \in (0,1]$ and moment $p > \max\{4,2/\eta\}$,
$$
\sup_{i\in \mathbb Z} \E([X_i]_\eta^p) < \infty,\quad \sup_{i \in \mathbb Z} \E([\epsilon_i]_\eta^p) < \infty,\quad \sup_{i \in \mathbb Z} \E(\|X_i\|_\infty^p) < \infty,\quad \sup_{i \in \mathbb Z} \E(\|\epsilon_i\|_\infty^p) < \infty,
$$
where $\|f\|_\infty := \sup_{t \in [0,1]} |f(t)|$ is the sup-norm and $[f]_\eta:=\sup_{s\ne t}|f(s)-f(t)|/|s-t|^\eta$ denotes the Hölder seminorm.

\item[(A3)] (Weak Dependence) The combined functional time series $(X_i, \epsilon_i)_{i \in \mathbb Z}$ is weakly stationary and $\beta$-mixing (absolutely regular) of size $-p/(p-4)$ if $\eta > 1/2$ and of size $-p(1-\eta)/(p\eta-2)$ if $\eta \leq 1/2$.
\end{itemize}
\noindent

Condition (A1) is a standard regression model assumption. (A2) imposes different forms of moment bounds. Continuity alone is not sufficient for inference in the sup-norm. We therefore control the modulus of continuity via Hölder conditions. 
For stochastic equicontinuity of sup-norm based estimators, it suffices to assume appropriate moment bounds on the Hölder seminorm and the sup-norm. Here, a finite $p$-th moment of the sup-norm of a $C[0,1]$-valued random variable yields tail control for the overall amplitude and a finite $p$-th moment of the Hölder seminorm gives tail control in the fine-scale oscillations.
(A3) controls the time-dependence of the functional time series. In particular the $\beta$-mixing condition ensures that autocovariances are absolutely summable and that the following long-run covariance kernel exists and is positive semidefinite,
\begin{align}
        C(s, t) = \sum_{l= -\infty}^\infty \mathrm{Cov} \Big((X_0(s)- \mu_X(s))\epsilon_0 (s), (X_l(t) - \mu_X(t))\epsilon_l (t)\Big), \quad \mu_X(t) = \E(X_0(t)). \label{longrunCovariances}
\end{align}
Following \cite{dehling1983limit}, $\beta$-mixing for Banach space-valued processes yields a strong invariance principle with approximation errors of order $O(k^{1/2-\kappa})$ for some $\kappa > 0$, whereas $\alpha$-mixing yields only the slower rate $o(\sqrt{k\log\log k})$. The latter rate is too slow for our statistical applications. While $\beta$-mixing is more restrictive than $\alpha$-mixing, it accommodates Gaussian autoregressive processes (see \citealt{carrasco2002}, \citealt{lu2022}), unlike $\phi$-mixing which excludes such processes, and thus offers a well-suited weak dependence framework for functional time series analysis.
The following theorem adapts the general results of \cite{dehling1983limit} to our specific setting.

\begin{theorem}\label{thm:null_limit}
Under Assumptions (A1)--(A3), there exists a probability space on which we can redefine
$(X_i,\epsilon_i)_{i\in\mathbb Z}$ together with a $C[0,1]$-valued Brownian motion $B(z,t)$ with
$$
\mathrm{Cov}(B(z,s),B(w,t))=\min\{z,w\}\,C(s,t),
$$
where $C(s, t)$ is given in \eqref{longrunCovariances}. Then under the null hypothesis $\mathcal H$, it holds that
$$
\sup_{z,t\in[0,1]}\Big|\Q_n(z,t)-\big(B(z,t)-zB(1,t)\big)\Big|=o_P(1).
$$
\end{theorem}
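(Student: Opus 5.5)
We begin with an algebraic decomposition of $\Q_n$ under $\mathcal H$. Write $\tilde X_i(t)=X_i(t)-\mu_X(t)$ and $e_i(t)=\tilde X_i(t)\epsilon_i(t)$. Under $\mathcal H$ we have $Y_i=\alpha_0+\gamma_0X_i+\epsilon_i$, so $\hat\epsilon_i(t)=(\epsilon_i(t)-\bar\epsilon(t))-(\hat\gamma(t)-\gamma_0(t))(X_i(t)-\hat\mu_X(t))$ with $\bar\epsilon=n^{-1}\sum_i\epsilon_i$. Hence
\begin{align*}
\Q_n(z,t)=\frac{1}{\sqrt n}\sum_{i\le\floor{nz}}(X_i(t)-\hat\mu_X(t))(\epsilon_i(t)-\bar\epsilon(t))-\sqrt n(\hat\gamma(t)-\gamma_0(t))\,\frac1n\sum_{i\le\floor{nz}}(X_i(t)-\hat\mu_X(t))^2 .
\end{align*}
Since $\sqrt n(\hat\gamma-\gamma_0)=\big(n^{-1}\sum_i(X_i-\hat\mu_X)^2\big)^{-1} n^{-1/2}\sum_i (X_i-\hat\mu_X)(\epsilon_i-\bar\epsilon)$, the statistic vanishes at $z=1$. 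The goal is to show
\begin{align*}
\sup_{z,t}\Big|\Q_n(z,t)-\Big(S_n(z,t)-zS_n(1,t)\Big)\Big|=o_P(1),\qquad S_n(z,t):=\frac{1}{\sqrt n}\sum_{i\le \floor{nz}}e_i(t).
\end{align*}
After that, it remains to approximate $S_n$ by $B$.

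\textbf{Step 1 (reduction to $S_n$).} Expanding the products, the differences between the first term and $S_n(z,t)$ are terms like $\sqrt n\,(\hat\mu_X-\mu_X)(t)\cdot n^{-1}\sum_{i\le nz}\epsilon_i(t)$ and $\sqrt n\,\bar\epsilon(t)\cdot n^{-1}\sum_{i\le nz}\tilde X_i(t)$. Each is a product of an $O_P(1)$ sup-norm term and an $o_P(1)$ sup-norm term. To get these we need uniform (in $z,t$) laws of large numbers and $O_P(1)$ bounds for partial sums of the $C[0,1]$-valued mixing sequences $\tilde X_i$, $\epsilon_i$, $\tilde X_i^2-\sigma^2_X$, and $e_i$, where $\sigma_X^2(t)=\mathrm{Var}(X_0(t))$.

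For the second term, we use the uniform LLN $\sup_{z,t}|n^{-1}\sum_{i\le nz}(X_i-\hat\mu_X)^2(t)-z\sigma_X^2(t)|=o_P(1)$ together with $\inf_t\sigma_X^2(t)>0$ from (A1). This gives $\sup_t|\big(n^{-1}\sum(X_i-\hat\mu_X)^2\big)^{-1}-\sigma_X^{-2}(t)|=o_P(1)$, so the second term equals $zS_n(1,t)+o_P(1)$ uniformly. Uniform LLNs in $C[0,1]$ follow from the pointwise mixing ergodic theorem plus equicontinuity, which is supplied by the Hölder moments in (A2): products of $\eta$-Hölder functions are $\eta$-Hölder, with seminorm bounded by $[f]_\eta\|g\|_\infty+\|f\|_\infty[g]_\eta$, and these have finite $p/2$ moments by Cauchy--Schwarz.

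\textbf{Step 2 (strong invariance principle for $S_n$).} We apply the Banach-space strong approximation of \cite{dehling1983limit} to the $C[0,1]$-valued, mean-zero (by (A1)), stationary $\beta$-mixing sequence $e_i$. This yields, on a richer probability space, a Brownian motion $W$ with covariance $\min\{z,w\}C(s,t)$ such that $\|\sum_{i\le k}e_i-W(k,\cdot)\|_\infty=O(k^{1/2-\kappa})$ a.s. Setting $B(z,\cdot)=n^{-1/2}W(nz,\cdot)$, which has the right law by scaling, we get $\sup_z\|S_n(z,\cdot)-B(z,\cdot)\|_\infty=o_P(1)$; the interpolation between $\floor{nz}$ and $nz$ is handled by continuity of $B$. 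Consequently $\sup_{z,t}|S_n(z,t)-zS_n(1,t)-(B(z,t)-zB(1,t))|=o_P(1)$, and combining with Step 1 gives the theorem. The same coupling also delivers the $O_P(1)$ bounds used in Step 1.

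\textbf{Main obstacle.} The hard part is verifying the hypotheses of Dehling's theorem in $C[0,1]$, which is not a Hilbert space and not of type 2. His result requires the CLT/tightness to hold, with suitable moment and entropy-type control. I would verify it through a Kolmogorov--Chentsov/Hölder chaining argument. The increments $e_i(s)-e_i(t)$ are bounded by $|s-t|^\eta$ times a variable with $p/2$ moments. A Rosenthal-type inequality for $\beta$-mixing sums then gives $\E|S_n(z,s)-S_n(z,t)|^{p/2}\lesssim |s-t|^{\eta p/2}$. Kolmogorov requires $\eta p/2>1$, which is exactly $p>2/\eta$. The mixing sizes in (A3) are what make this Rosenthal bound and the summability of $C(s,t)$ go through. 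I would check this moment-mixing bookkeeping carefully; the remaining steps are routine.
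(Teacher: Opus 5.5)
Your proposal is correct and follows essentially the paper's route. Your residual identity under $\mathcal H$ is a more direct bookkeeping of the paper's bridge decomposition $\Q_n=\sqrt n\,\bigl(U_n(z,t)-\tfrac{V_n(z,t)}{V_n(1,t)}U_n(1,t)\bigr)$, with negligible centering terms and $V_n(z,t)/V_n(1,t)\to z$ uniformly. The paper then applies Dehling's strong invariance principle to $e_i=\tilde X_i\epsilon_i$, and also to $\tilde X_i$, $\epsilon_i$ and $\tilde X_i^2-\sigma_X^2$ to get the uniform bounds. That is preferable to an ergodic-theorem argument, since (A3) assumes only weak stationarity. Dehling's tightness condition is verified by a Rosenthal inequality plus chaining, as you propose.

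One correction to the bookkeeping: the Rosenthal exponent cannot be $r=p/2$. The inequality needs moments of order strictly larger than $r$, and the increments of $e_i$ have only $q=p/2$. It also needs $\sum_h (h+1)^{r-2}\beta(h)^{1-r/q}<\infty$, which (A3) does not guarantee for $r$ close to $q$. Instead take $r=2$ if $\eta>1/2$, and $r\in\bigl(1/\eta,\min\{q,\,q(1+\varphi)/(q+\varphi)\}\bigr)$ if $\eta\le 1/2$, where $\beta(h)=O(h^{-\varphi})$. This interval is nonempty precisely because of the mixing size in (A3), and $r\eta>1$ is all the chaining step needs.
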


A direct consequence of Theorem \ref{thm:null_limit} together with the continuous mapping theorem is that the limiting distributions of our test statistics under the null hypothesis $\mathcal H$ are characterized as functionals of $C[0,1]$-valued Brownian bridges as follows,
\begin{align}
	\Q_{\text{sup}} &\overset{d}{\to} \sup_{z,t\in[0,1]} \big| B (z,t) - z B (1,t) \big|, \label{d2_supRej}\\
	\Q_{L^2}  &\overset{d}{\to} \sup_{z\in[0,1]} \bigg( \int_0^1 \big( B (z,t) - z B (1,t) \big)^2 \,dt \bigg)^{1/2}  \label{d2_L2Rej}.
\end{align}
Accordingly, for a given significance level $\rho\in(0,1)$, we reject $\mathcal H$ based on $\Q_{\text{sup}}$ if $\Q_{\text{sup}} > q_{1-\rho,\text{sup}}$, and based on $\Q_{L^2}$ if $\Q_{L^2} > q_{1-\rho,L^2}$, where $q_{1-\rho,\text{sup}}$ and $q_{1-\rho,L^2}$ denote the $(1-\rho)$-quantiles of the respective limiting null distributions in \eqref{d2_supRej}--\eqref{d2_L2Rej}. Their practical determination is discussed in Section \ref{sec:practicalImplementation}.

\begin{theorem}\label{thm:alt_limit}
Suppose Assumptions (A1)--(A3) hold. Under the alternative $\mathcal K$,
$$
\sup_{z,t\in[0,1]}
\Big|
\Q_n(z,t)-\sqrt n\,\sigma_X^2(t)\int_0^z\gamma^*(u,t)\,du
\Big|=O_P(1),
$$
where $\sigma_X^2(t)=\mathrm{Var}(X_0(t))$ and
$\gamma^*(z,t)=\gamma(z,t)-\int_0^1\gamma(u,t)\,du$.
\end{theorem}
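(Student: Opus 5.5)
We substitute the model into the residuals and split $\Q_n(z,t)$ into a deterministic drift plus stochastic remainders, bounding the remainders uniformly in $(z,t)$.

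\textbf{Decomposition.} Write $\tilde X_i(t)=X_i(t)-\mu_X(t)$, $\bar\gamma(t)=\int_0^1\gamma(u,t)\,du$ and $\gamma_i(t)=\gamma(i/n,t)$. The model gives
$$
Y_i(t)-\hat\mu_Y(t)=\alpha_i(t)-\bar\alpha_n(t)+\gamma_i(t)X_i(t)-\overline{\gamma X}_n(t)+\epsilon_i(t)-\bar\epsilon_n(t).
$$
The residual then decomposes as
$$
\hat\epsilon_i(t)=\epsilon_i(t)-\bar\epsilon_n(t)+\big(\gamma_i(t)-\hat\gamma(t)\big)\big(X_i(t)-\hat\mu_X(t)\big)+r_i(t).
$$
Here $r_i(t)=\alpha_i(t)-\bar\alpha_n(t)+(\gamma_i(t)-\bar\gamma_n(t))\mu_X(t)+(\text{centering terms})$ collects contributions that are deterministic or products of deterministic weights and $\bar X$-type averages. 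Multiplying by $X_i(t)-\hat\mu_X(t)$ and summing to $\floor{nz}$, the main contributions are the following.
\begin{itemize}
\item[(i)] $n^{-1/2}\sum_{i\le nz}\tilde X_i\epsilon_i$ minus its centered counterpart. By Theorem~\ref{thm:null_limit}, or more precisely by the strong invariance principle of \cite{dehling1983limit} used in its proof (which needs only (A1)--(A3), not $\mathcal H$), this term is uniformly $O_P(1)$.
\item[(ii)] $n^{-1/2}\sum_{i\le nz}(\gamma_i(t)-\hat\gamma(t))(X_i(t)-\hat\mu_X(t))^2$.
\item[(iii)] $n^{-1/2}\sum_{i\le nz}(X_i-\hat\mu_X)r_i$. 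Here $r_i$ is a weight of bounded variation in $i$, uniformly in $t$, by the piecewise Lipschitz assumption.
\end{itemize}

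\textbf{Identifying the drift.} First we show $\sup_t|\hat\gamma(t)-\bar\gamma(t)|=O_P(n^{-1/2})$. Writing $\hat\gamma$ as a ratio, its numerator is
$$
n^{-1}\sum_i\gamma_i(t)(X_i(t)-\hat\mu_X(t))^2+n^{-1}\sum_i(X_i(t)-\hat\mu_X(t))\big(\epsilon_i(t)+r_i(t)\big).
$$
The denominator is uniformly close to $\sigma_X^2(t)>0$ by (A1). We then need uniform laws of the form
$$
\sup_{z,t}\Big|n^{-1/2}\sum_{i\le nz}w_i(t)\big(\tilde X_i(t)^2-\sigma_X^2(t)\big)\Big|=O_P(1)
$$
for deterministic weights $w_i(t)=\gamma_i(t)$ or $1$ of bounded variation. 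We obtain this from a $C[0,1]$-valued invariance principle or maximal inequality for the sequence $\tilde X_i^2-\sigma_X^2$, whose Hölder and sup moments of order $p/2>2$ are controlled by (A2), followed by summation by parts to absorb the weights.

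Term (ii) then equals
$$
n^{-1/2}\sigma_X^2(t)\sum_{i\le nz}\big(\gamma_i(t)-\bar\gamma(t)\big)+O_P(1).
$$
The Riemann-sum error satisfies
$$
\Big|\frac1n\sum_{i\le nz}\gamma(i/n,t)-\int_0^z\gamma(u,t)\,du\Big|=O(n^{-1})
$$
uniformly, by piecewise Lipschitz continuity with finitely many breaks. This yields the drift $\sqrt n\,\sigma_X^2(t)\int_0^z\gamma^*(u,t)\,du$. The analogous terms $n^{-1/2}\sum_{i\le nz}\tilde X_i(t)\,c_i(t)$ with deterministic bounded-variation $c_i$ are $O_P(1)$ by summation by parts and the invariance principle for partial sums of $\tilde X_i$. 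Cross terms involving $\hat\mu_X-\mu_X=O_P(n^{-1/2})$ contribute $O_P(1)$ after multiplication by partial sums of order $n$ times $n^{-1/2}$. Term (iii) is of this type: $\sum_i(X_i-\hat\mu_X)\,c=0$ for constant $c$, so only fluctuations of $\tilde X_i$ against a bounded-variation weight remain.

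\textbf{Main obstacle.} The delicate step is the uniform-in-$t$ control of the weighted partial sums of $\tilde X_i^2-\sigma_X^2$ and $\tilde X_i$ with time-varying deterministic weights. This requires tightness in $C[0,1]$ for squared processes, whose Hölder seminorm is bounded by $2\|\tilde X_i\|_\infty[\tilde X_i]_\eta$. Its moments of order $p/2$ follow from (A2) via Cauchy--Schwarz, and the mixing size in (A3) is exactly what the Dehling-type result requires at that moment level.
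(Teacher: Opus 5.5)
Your argument is correct, but it is organized differently from the paper's proof.

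\textbf{The paper's route.} The paper never expands the residual. It uses the bridge identity $\Q_n(z,t)=\sqrt n\bigl(U_n(z,t)-\tfrac{V_n(z,t)}{V_n(1,t)}U_n(1,t)\bigr)$, which is the same as $\sqrt n\,(U_n(z,t)-\hat\gamma(t)V_n(z,t))$. It splits $U_n=S_{\epsilon,n}+S_{\gamma,n}+S_{\text{coef},n}+S_{\text{cent},n}$ exactly as in the null proof and shows $\sqrt n\,(U_n-S_{\gamma,n})=O_P(1)$ and $\sqrt n\,\bigl(S_{\gamma,n}(z,t)-\sigma_X^2(t)\int_0^z\gamma(u,t)\,du\bigr)=O_P(1)$ uniformly. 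It then finishes with $\sqrt n\,\bigl(V_n(z,t)/V_n(1,t)-z\bigr)=O_P(1)$. The centering at $\int_0^1\gamma(u,t)\,du$ therefore enters only at the very end, through $z\,U_n(1,t)$.

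\textbf{Your route.} You substitute the model into $\hat\epsilon_i$ and isolate $(\gamma_i-\hat\gamma)(X_i-\hat\mu_X)$. This forces you to prove first that $\sup_t|\hat\gamma(t)-\int_0^1\gamma(u,t)\,du|=O_P(n^{-1/2})$. It costs more bookkeeping: your $r_i$ mixes deterministic and random centering pieces, and you need the lower bound on $V_n(1,t)$ separately for the ratio. In return you get an interpretable by-product: under $\mathcal K$ the pooled OLS slope estimates the time-averaged slope at the parametric rate, which explains why the drift involves $\gamma^*$. The paper's route is more economical because it reuses the null-hypothesis decomposition essentially verbatim.

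\textbf{Shared ingredients.} The analytic tools are the same in both proofs:
\begin{itemize}
\item Lemma~\ref{lem:SIP} applied to $\tilde X_i\epsilon_i$, $\tilde X_i$ and $\tilde X_i^2-\sigma_X^2$ with $q=p/2$. Your check that (A2)--(A3) give exactly the required moments and mixing size is right.
\item Abel summation for bounded-variation weights.
\item The $O(n^{-1})$ Riemann-sum error.
\end{itemize}

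\textbf{One harmless inaccuracy.} In term (iii) you use $\sum_i(X_i-\hat\mu_X)\,c=0$ for constant $c$. But $\Q_n$ involves partial sums over $i\le\floor{nz}$, and there this vanishes only at $z=1$. Nothing breaks, because a constant is itself a bounded-variation weight. By Lemma~\ref{lem:samplemoments-asymptotics}(a),(c), $\sup_{z,t}\bigl|n^{-1/2}\sum_{i\le\floor{nz}}(X_i(t)-\hat\mu_X(t))\bigr|=O_P(1)$, so the term is $O_P(1)$ anyway. For the random constant $n^{-1}\sum_j\gamma_j\tilde X_j=O_P(n^{-1/2})$ it is even $o_P(1)$.
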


Theorem \ref{thm:alt_limit} implies that our tests are consistent whenever there exist two rescaled time points $z_1, z_2 \in[0,1]$ such that $\|\gamma(z_1,\cdot)-\gamma(z_2,\cdot)\|_\infty>0$.
In that case, the mapping $z\mapsto \int_0^z \gamma^*(u,t_0)\,du$
is not identically zero. Together with $\inf_{t\in[0,1]}\sigma_X^2(t)>0$ from (A1), this yields
$$
\sup_{z,t\in[0,1]}\Big|\sigma_X^2(t)\int_0^z \gamma^*(u,t)\,du\Big|>0,
$$
and similarly
$$
\sup_{z\in[0,1]}\int_0^1\Big|\sigma_X^2(t)\int_0^z \gamma^*(u,t)\,du\Big|^2dt>0,
$$
since the remainder in Theorem \ref{thm:alt_limit} is $O_P(1)$ uniformly over $(z,t)$, the $\sqrt n$-drift dominates,
and both $\Q_{\text{sup}}$ and $\Q_{L^2}$ diverge in probability.

\begin{corollary}\label{cor:consistency}
Under the assumptions of Theorem~\ref{thm:alt_limit} and alternative $\mathcal K$, as $n\to\infty$,
$$
	\Q_{\sup}\overset{p}{\to}\infty, \quad \text{and} \quad \Q_{L^2}\overset{p}{\to}\infty.
$$
\end{corollary}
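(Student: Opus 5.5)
The corollary follows from Theorem~\ref{thm:alt_limit}: the centered process is $O_P(1)$ uniformly, while the deterministic drift is of exact order $\sqrt n$.

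Write $D(z,t):=\sigma_X^2(t)\int_0^z\gamma^*(u,t)\,du$ and $R_n(z,t):=\Q_n(z,t)-\sqrt n\,D(z,t)$. By Theorem~\ref{thm:alt_limit}, $\sup_{z,t}|R_n(z,t)|=O_P(1)$.

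\textbf{Step 1: the drift is nonzero.} I first show $d_\infty:=\sup_{z,t}|D(z,t)|>0$. Under $\mathcal K$ there are $z_1,z_2$ and $t_0$ with $\gamma(z_1,t_0)\neq\gamma(z_2,t_0)$. Hence $u\mapsto\gamma(u,t_0)$ is not constant. Since it is piecewise Lipschitz, it differs from its mean $\int_0^1\gamma(u,t_0)\,du$ on a set of positive Lebesgue measure. Consequently $\gamma^*(\cdot,t_0)$ is not a.e. zero, and its primitive $F(z)=\int_0^z\gamma^*(u,t_0)\,du$ is not identically zero. Indeed, $F\equiv0$ would force $\gamma^*(\cdot,t_0)=0$ a.e. by Lebesgue differentiation. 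Combined with $\sigma_X^2(t_0)>0$ from (A1), this gives $d_\infty>0$.

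\textbf{Step 2: the $L^2$ version of the drift.} I also need $d_2:=\sup_z\|D(z,\cdot)\|_{L^2}>0$. Fix $z_0$ with $D(z_0,t_0)\neq0$. The map $t\mapsto D(z_0,t)$ is continuous, because $\gamma$ is continuous in $t$ uniformly (dominated convergence under boundedness from piecewise Lipschitz) and $\sigma_X^2$ is continuous; the latter follows from (A2) via continuity of $X_0$ and dominated convergence. So $D(z_0,\cdot)$ is nonzero on a neighbourhood of $t_0$, and $\|D(z_0,\cdot)\|_{L^2}>0$. This continuity argument is the only step needing some care.

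\textbf{Step 3: divergence.} By the reverse triangle inequality,
\[
\Q_{\sup}\ \ge\ \sqrt n\, d_\infty-\sup_{z,t}|R_n(z,t)| .
\]
For the $L^2$ statistic, Minkowski's inequality in $L^2[0,1]$ at $z=z_0$ gives
\[
\Q_{L^2}\ \ge\ \sqrt n\,\|D(z_0,\cdot)\|_{L^2}-\Big(\int_0^1R_n(z_0,t)^2\,dt\Big)^{1/2},
\]
and the last term is bounded by $\sup_{z,t}|R_n(z,t)|$.

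Fix $M>0$ and $\varepsilon>0$. Choose $K$ with $P(\sup|R_n|>K)<\varepsilon$ for all large $n$. Then for $n$ large enough that $\sqrt n\,\min\{d_\infty,\|D(z_0,\cdot)\|_{L^2}\}>M+K$, we get $P(\Q_{\sup}\le M)<\varepsilon$ and $P(\Q_{L^2}\le M)<\varepsilon$. Hence both statistics diverge in probability.
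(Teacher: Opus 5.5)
Your proposal is correct and follows the same route as the paper: the drift $\sqrt n\,\sigma_X^2(t)\int_0^z\gamma^*(u,t)\,du$ from Theorem~\ref{thm:alt_limit} is nonzero, the remainder is $O_P(1)$ uniformly, and so both statistics diverge. The paper only asserts that the primitive is not identically zero and handles the $L^2$ drift with ``similarly''; your Lebesgue-differentiation argument and your continuity-in-$t$ argument supply exactly the justification for those two steps.
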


\section{Practical implementation}
\label{sec:practicalImplementation}
\subsection{Estimating the long-run covariance}
According to the main theorem, the asymptotic distribution of the test statistics is correctly characterized only when the relevant long-run covariances in \eqref{longrunCovariances} are available. These are given by
$$
    C (s,t) = c_0 (s,t) + \sum_{l=1}^\infty \Big( c_l (s,t) + c_l (t,s) \Big) \quad \quad \quad \quad s, t \in [0,1],
$$
where each term $c_l (s,t)$ is given by,  
$$
    \qquad c_l(s,t) = \operatorname{Cov}\Big(\big (X_0(s)- \mu_X (s)\big ) \epsilon_0(s), \big (X_{l}(t)- \mu_X(t) \big)\epsilon_{l}(t) \Big),
$$
represents the lag-$l$ covariance. The availability of the long-run covariance $C(s,t)$ is a long-standing problem in functional data analysis and has been discussed previously for example in Section 3.3 of \cite{boniece2023changepoint}.  Additionally, in \cite{rice2017plug} the estimation of the long-run covariance $C(s, t)$ is extensively discussed in the context of bandwidth selection for stationary functional time series. 
Following \cite{rice2017plug}, we estimate the long-run covariance kernel $C(s, t)$ as follows,
\begin{align}
    \widehat C  (s,t)= \hat c_0  (s,t) + \sum_{l = 1}^\infty W_q\big(l/h\big) \Big( \hat c_l (s,t) + \hat c_l (t,s) \Big)
    \label{lrcf}
\end{align}
and the autocovariances are estimated as 
$$
   \hat c_l (s,t) = \frac{1}{n} \sum_{j=1}^{n-l} \big ( X_j(s)  - \hat\mu_X (s) \big ) \hat \epsilon_j(s) \big( X_{j+l}(t) - \hat \mu_X (t) \big )  \hat \epsilon_{j+l}(t), \qquad l \geq 0.
$$
In \eqref{lrcf}, $W_q (\cdot)$ is a symmetric and continuous weight function with bounded support of order $q$ as defined in equations (2)--(3) of \cite{rice2017plug}, and $h$ is a bandwidth parameter. The choice of the weight function depends on user-specific requirements. In adherence to the results and arguments in Section 3 of \cite{bastian2024multiplechangepointdetection} we choose the Quadratic-Spectral weight function for the simulations and the data example used in the rest of this work. 

\subsection{Estimation of quantiles}
\label{est-quantiles}
Let $(\lambda_l , \phi_l)_{l \in \mathbb N}$ denote the orthonormal eigenpairs of the long run covariance kernel $C (s,t)$. Then, by Mercer's theorem, we have the following,
$$
  C (s,t) = \sum_{l=1}^\infty \lambda_l  \phi_l (s) \phi_l (t).
$$
Note that for the $C[0,1]$-valued Brownian motion $B(z,t)$, the Karhunen-Loève expansion in the second component of $B (z,t)$ gives,
$$
    B (z,t) = \sum_{l=1}^\infty \sqrt{\lambda_l } B_l(z) \phi_l(t),
$$
where $B_l(z) = (\lambda_l)^{-1/2}\int_0^1 B(z,t) \phi_l(t) \, dt$. Note that $B_l$ is a Brownian motion with the covariance function,
\begin{align*}
    \text{Cov} (B_l(z_1), B_l(z_2)) 
    &= \lambda_l^{-1} \int_0^1 \int_0^1 \E(B (z_1,s)B (z_2,t)) \phi_l(s) \phi_l(t) \, ds \, dt
    = \min\{z_1, z_2\}. 
\end{align*}

$\{B_l\}$ forms an independent sequence of standard Brownian motions because they are defined as projections from a joint Gaussian process and satisfy $\text{Cov} (B_l(z_1), B_m(z_2)) = 0$ for $l \neq m$. This means that we can represent the limiting Brownian bridge process as,
$$
	B (z,t) - z B (1,t) = \sum_{l=1}^\infty \sqrt{\lambda_l} \phi_l(t) (B_l(z) - z B_l(1)).
$$
 Consequently, we have from our main theorem that,
\begin{align}
	\Q_{\text{sup}} &\overset{d}{\to} \sup_{z,t\in[0,1]} \big| B(z,t) - z B(1,t) \big| \overset{d}{=}  \sup_{z,t\in[0,1]} \Big| \sum_{l=1}^\infty \sqrt{\lambda_l} \phi_l(t) (B_l(z) - z B_l(1)) \Big| \label{asymp_mercerSUP}\\
	\Q_{L^2} &\overset{d}{\to} \sup_{z\in[0,1]} \bigg( \int_0^1 \big( B(z,t) - z B(1,t) \big)^2 \,dt \bigg)^{1/2} \overset{d}{=}  \sup_{z\in[0,1]} \Big( \sum_{l=1}^\infty \lambda_l  (B_l(z)- z B_l(1))^2 \Big)^{1/2},\label{asymp_mercerL2}
\end{align}
where we recall the definitions of $\Q_{\text{sup}}$ and $\Q_{L^2}$ in \eqref{q2_supRej002} and \eqref{q2_L2_Rej}. This representation of the limiting distribution can be estimated and truncated to get the quantile for the decision rule in \eqref{q2_supRej002} and \eqref{q2_L2_Rej} respectively for each test statistic. Note that the summations in \eqref{asymp_mercerSUP} and \eqref{asymp_mercerL2} are truncated to a sum of $m$-terms. Further, as higher order eigenvalues of Hilbert-Schmidt operators are only estimated with lower precision, careful choice of the trimming parameter $m$ is important in practice. We follow the heuristic guideline of \cite{horvath2014testing} and \cite{kokoszka2017testing} where the suggestion is to choose $m$ so that $85\%$ of the variance is explained.   

\begin{singlespace}
\begin{algorithm}[ht]
\begin{algorithmic}[1]
\State \textbf{Given} $\{(Y_i(t),X_i(t)),\, t \in [0,1]\}_{i=1}^n$, metric $\in\{\sup,L^2\}$, significance level $\rho$, Monte Carlo size $R$ (e.g., $R=1000$).
\State \textbf{Compute} $\hat\gamma(t)$, $\hat\alpha(t)$, and residuals $\hat\epsilon_i(t)$ as in equations \eqref{slope_est}--\eqref{OLS_errors}.
\State \textbf{Compute} the test statistic $\Q_{\sup}$ or $\Q_{L^2}$ as in equations \eqref{q2_supRej002}--\eqref{q2_L2_Rej} using $\hat\epsilon_i(t)$.
\State \textbf{Estimate} $\widehat C(s,t)$ as in \eqref{lrcf} and \textbf{compute} eigenpairs $(\hat\lambda_\ell,\hat\phi_\ell)_{\ell\ge1}$ (sorted $\hat\lambda_1\ge\hat\lambda_2\ge\cdots$)
\State \textbf{Set} $m=\min\{M:\sum_{\ell=1}^M \hat\lambda_\ell \ge 0.85\,\mathrm{tr}(\widehat C)\}$, where $\mathrm{tr}(\widehat C)=\int_0^1 \widehat C(t,t)\,dt$.

\For{$r=1,\ldots,R$}
  \State \textbf{Generate} i.i.d.\ standard Brownian bridges
  $\mathbb B^{(r)}_\ell(z) := B^{(r)}_\ell(z)-zB^{(r)}_\ell(1)$, $\ell=1,\ldots,m$, where $B^{(r)}_\ell$ are standard Brownian motions.
  \If{metric = $\sup$}
    \State \textbf{Compute} $T_{\sup}^{(r)} := \sup_{z,t\in[0,1]}
    \Big| \sum_{\ell=1}^m \sqrt{\hat \lambda_\ell}\,\hat \phi_\ell(t)\,\mathbb B^{(r)}_\ell(z) \Big|$.
  \Else
    \State \textbf{Compute} $T_{L^2}^{(r)} := \sup_{z \in [0,1]}
    \Big(\sum_{\ell=1}^m \hat\lambda_\ell\, (\mathbb B^{(r)}_\ell(z))^2\Big)^{1/2}$.
  \EndIf
\EndFor

\If{metric = $\sup$}
  \State \textbf{Set} $q_{1-\rho,\sup}$ as the empirical $(1-\rho)$-quantile of $\{T_{\sup}^{(r)}\}_{r=1}^R$.
  \State \textbf{Reject} $\mathcal H$ if $\Q_{\sup}>q_{1-\rho,\sup}$.
\Else
  \State \textbf{Set} $q_{1-\rho,L^2}$ as the empirical $(1-\rho)$-quantile of $\{T_{L^2}^{(r)}\}_{r=1}^R$.
  \State \textbf{Reject} $\mathcal H$ if $\Q_{L^2}>q_{1-\rho,L^2}$.
\EndIf
\end{algorithmic}
\caption{Change point testing with sup- or $L^2$-norm metric}
\label{basuAlg001}
\end{algorithm}
\end{singlespace}

\section{Monte Carlo simulations}
\label{sec:mc}

This section investigates finite-sample size and power of the proposed tests based on $\Q_{\sup}$ and $\Q_{L^2}$.
We consider both i.i.d.\ and serially dependent functional regressors, and we compare power against (i) global changes of the slope function and (ii) localized spike-like changes.

\begin{figure}[t]
\centering
\includegraphics[width=\linewidth]{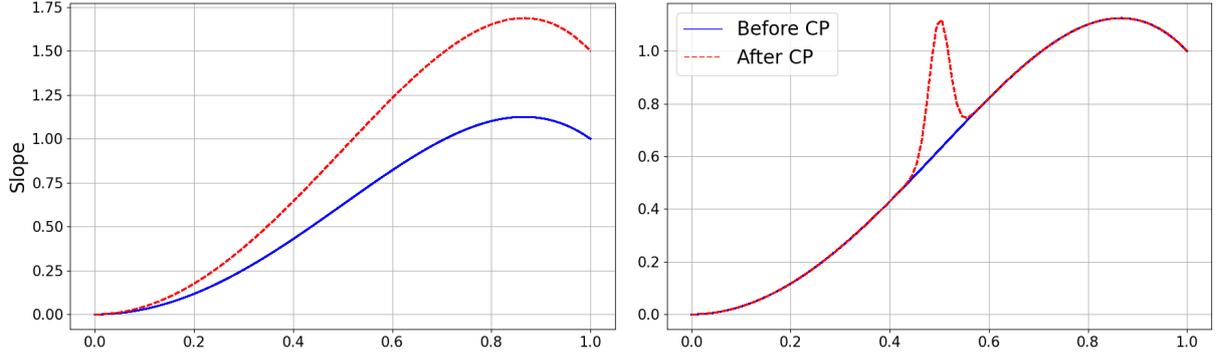}
\caption{\textit{Left:} Scaled alternative \eqref{scaled_transformation}. \textit{Right:} Spiked alternative \eqref{spiked_transform}.
The pre-change slope $\gamma_0$ is shown as a solid curve and the post-change slope $\gamma_1$ as a dashed curve.}
\label{fig:slope_transformation}
\end{figure}

The regressor and error processes are generated using truncated Fourier expansions with truncation
level $D=12$. Let $\{\phi_l\}_{l=1}^D$ and $\{\psi_l\}_{l=1}^D$ denote Fourier bases on $[0,1]$.
For the regressor we use low-frequency functions
$$
\phi_1(t)=\sin(2\pi t),\quad \phi_2(t)=\cos(2\pi t),\quad
\phi_3(t)=\sin(4\pi t),\quad \phi_4(t)=\cos(4\pi t),\ \dots,
$$
whereas for the errors we use higher-frequency components
$$
\psi_1(t)=\sin(12\pi t),\quad \psi_2(t)=\cos(12\pi t),\quad
\psi_3(t)=\sin(24\pi t),\quad \psi_4(t)=\cos(24\pi t),\ \dots .
$$

In the i.i.d.\ setting we generate independent coefficients $N_{i,l}\sim\mathcal N(0,4)$.
To introduce serial dependence, we generate coefficient sequences $\{F_{i,\ell}\}_{i\in\mathbb Z}$ via
\begin{align}
F_{i,l} = 0.8\,F_{i-1,l} + \sigma_l \xi_{i,l},
\qquad
\sigma_l^2 = \frac{\sigma^2}{l^2},
\qquad
\xi_{i,l}\overset{i.i.d.}{\sim}\mathcal N(0,1),
\label{AR_coeffs}
\end{align}
with $\sigma=4$, yielding decreasing variance at higher frequencies.
To simulate a random starting value for the autoregression \eqref{AR_coeffs}, we generate 200 burn-in samples before collecting the sample used in the study.
We set a smooth nonzero mean function
$$
m_X(t)= 9\exp\!\big\{-100\,(t-0.5)^2\big\}, \qquad t\in[0,1],
$$
and define
\begin{align}
X_i(t) &= m_X(t) + \sum_{\ell=1}^D N_{i,\ell}\,\phi_\ell(t),
\qquad\text{(IID-case)}, \label{IIDX}\\
X_i(t) &= m_X(t) + \sum_{\ell=1}^D F_{i,\ell}\,\phi_\ell(t),
\qquad\text{(AR-case)}. \label{ARX}
\end{align}
Regression errors are generated independently across $i$ via
$$
\epsilon_i(t) = \sum_{\ell=1}^D \tilde N_{i,\ell}\,\psi_\ell(t),
\qquad
\tilde N_{i,\ell}\overset{i.i.d.}{\sim}\mathcal N(0,1),
$$
independently of the regressor coefficients.
We use the concurrent regression model \eqref{flr_intro}
with pre-change intercept and slope
\begin{align}
\gamma_0(t) &= t^2(3-2t^3), \label{slope_knot}\\
\alpha_0(t) &= (1-t)^2 + (3-2(1-t)). \label{intercept_knot}
\end{align}
A single change point is introduced at $k^*=\lfloor 0.5\,n\rfloor$ via
$$
\gamma_i(t)=
\begin{cases}
\gamma_0(t), & i<k^*,\\
\gamma_1(t), & i\ge k^*.
\end{cases}
$$
Equivalently, $\gamma_i(t)=\gamma_0(t)+\Delta\gamma(t)\,\mathbbm 1\{i\ge k^*\}$ with
$\Delta\gamma(t) := \gamma_1(t)-\gamma_0(t)$.
We consider two types of post-change slopes $\gamma_1$.
First, we simulate alternatives with a scaled transformation (global change) as follows,
\begin{align}
\gamma_1(t) = \delta\,\gamma_0(t),
\qquad \delta\in(0,1),
\label{scaled_transformation}
\end{align}
so that $\Delta\gamma(t)=(\delta-1)\gamma_0(t)$. In the experiments below we set $\delta=0.5$.
Second, we consider alternatives with spiked transformations (localized changes) using 
\begin{align}
\gamma_1(t) = \gamma_0(t) + \mathcal S(t),
\label{spiked_transform}
\end{align}
where the spike is
\begin{align}
\mathcal S(t)=\frac12\exp\!\left(-\frac12\left(\frac{t-0.5}{0.02}\right)^2\right).
\label{spike_sup}
\end{align}
The two post-change slopes are displayed in Figure \ref{fig:slope_transformation}.

We consider sample sizes $n\in\{100,300,500,1000\}$ under both the IID and AR(1) regressor designs
\eqref{IIDX}--\eqref{ARX}. Tests are carried out at significance level $\rho=0.05$.
For each simulated dataset we compute $\Q_{\sup}$ and $\Q_{L^2}$ and approximate the corresponding
critical values using Algorithm~\ref{basuAlg001} with Monte Carlo size $R=100$ and the same long-run covariance and truncation choices as in Section \ref{sec:practicalImplementation}.
Empirical rejection probabilities are based on $1000$ independent repetitions.

\begin{table}[htbp]
\centering
\begin{tabular}{|c|c||c|c||c|c|}
    \hline
    \multirow{2}{*}{Sample size $(n)$} & \multirow{2}{*}{Setting}
    & \multicolumn{2}{c||}{IID} & \multicolumn{2}{c|}{AR(1)} \\
    \cline{3-6}
    & & $\Q_{L^2}$ & $\Q_{\sup}$ & $\Q_{L^2}$ & $\Q_{\sup}$\\
    \hline \hline
    \multirow{3}{*}{100}
    & $\mathcal H$ & 0.04 & 0.045 & 0.01 & 0.05 \\
    & $\delta=0.5$ & 1.00 & 0.72  & 1.00 & 0.95 \\
    & spike $\mathcal S$ & 0.02 & 0.054 & 0.05 & 0.24 \\
    \hline
    \multirow{3}{*}{300}
    & $\mathcal H$ & 0.05 & 0.03 & 0.05 & 0.05 \\
    & $\delta=0.5$ & 1.00 & 0.99 & 1.00 & 0.98 \\
    & spike $\mathcal S$ & 0.09 & 0.12 & 0.52 & 0.96 \\
    \hline
    \multirow{3}{*}{500}
    & $\mathcal H$ & 0.05 & 0.05 & 0.045 & 0.04 \\
    & $\delta=0.5$ & 1.00 & 1.00 & 1.00 & 0.99 \\
    & spike $\mathcal S$ & 0.12 & 0.20 & 0.90 & 1.00 \\
    \hline
    \multirow{3}{*}{1000}
    & $\mathcal H$ & 0.05 & 0.05 & 0.05 & 0.05 \\
    & $\delta=0.5$ & 1.00 & 1.00 & 1.00 & 1.00 \\
    & spike $\mathcal S$ & 0.19 & 0.89 & 1.00 & 1.00 \\
    \hline
\end{tabular}
\caption{Empirical rejection probabilities at level $\rho=0.05$ for $\Q_{L^2}$ and $\Q_{\sup}$ under IID and AR(1) regressor designs \eqref{IIDX}--\eqref{ARX}.
The change point is $k^*=\lfloor 0.5\,n\rfloor$. The scaled alternative uses $\delta=0.5$ in \eqref{scaled_transformation} and the spiked alternative is given by \eqref{spiked_transform}--\eqref{spike_sup}.}
\label{tab:Q2_only_Jump_Spike}
\end{table}

Table~\ref{tab:Q2_only_Jump_Spike} shows that under the null hypothesis of no change in the slope ($\mathcal H$),
the empirical rejection probabilities are close to and generally below the nominal level $\rho=0.05$.
Under the scaled transformation \eqref{scaled_transformation}, both tests exhibit high power across all sample sizes,
with $\Q_{L^2}$ performing slightly better in detecting this global change.
For the spiked transformation \eqref{spiked_transform}--\eqref{spike_sup}, we observe better performance of $\Q_{\sup}$,
which we attribute to the ability of the sup-norm to detect localized spike-like deviations in functional data.

\section{Application to hip and knee angle data}

\label{sports-data-section}

In this section, we apply our methodology to biomechanical hip and knee angle data obtained from runners. Note that the data used in this work were collected via body-worn sensors, Xsens (Xsens MVN link sensors, sampling at 240 Hz, see \citealt{schepers2018xsens}). Runners, who were not professional athletes and aged 26--39 years were recruited by the \textit{Sports, Data, and Interaction\footnote{\url{http://www.sports-data-interaction.com}}} team and a fatigue protocol was followed. 
This protocol was designed to induce progressive fatigue during the run so that fatigue-related adaptations could emerge over time. Such adaptations are typically more pronounced in non-professional than in professional runners.

An example of the functional time series consisting of the knee and hip angle data is provided in Figure \ref{fig:snapshotData} which is a snapshot from a part of the run. Noticeably such data collected from runners has repetitive cyclic patterns because running involves repetitive movements. As previously mentioned, we will consider each such cycle of the data as a realization of some underlying random function. In the following, we explore two practical applications: the impact of fatigue on (a) co-motion and (b) symmetricity. In the former, the idea is to explore whether the mutual movement of pairwise combinations of the hip, knee and ankle from the same side of the body undergo change due to fatiguing conditions. This study on co-motion or coordination is motivated based on the observed stability of in-phase coordination patterns in experienced runners as compared to novice runners, whose anti-phase motion patterns and higher joint-level variability suggest earlier or more pronounced structural transitions (see \citealt{mo2019differences}). Such differences in coordination stability suggest that, in our setting, experienced runners may exhibit later or very small structural transitions in the estimated slope relationship, while novice runners may show earlier or more pronounced change points. In the latter case of symmetricity, however, we select the knee joint from both the left and right sides of the body to examine whether the symmetry of movement between the two sides is affected by stressor conditions such as fatigue. Our study is motivated due to known results in biomechanics where inter-limb asymmetries may be exaggerated as a result of fatigue leading to a higher injury risk due to unequal force absorption, see \cite{heil2020influence} and literature therein. A precise mathematical model together with inferential methods to identify such irregularities as a result of fatigue enables the biomechanical practitioner to quantify fatigue-driven deviations and implement targeted interventions to mitigate injury risk.

 For each runner, we omit the first and last $20\%$ of observations. The early portion typically reflects the runner acclimating to the long-run conditions, while the final portion captures the natural slowdown toward the end of the session. We focus on fatigue related changes during the course of the run. As each runner has run a different length of time and hence also has different sample sizes, we provide the change locations as a percentage of the considered dataset of the run. Since each runner completes a different duration and thus yields a different number of observations, we report change point locations as percentages of the portion of the run retained for analysis.

\subsection{Change detection in co-motion per side}

We investigate the co-movement of lower-extremity joint angles on a given side of the body throughout a fatiguing run. Specifically, we aim to assess whether the slope parameter of a functional linear regression model which links, for example, the hip to the knee angle, or the knee to the ankle angle, remains stable over time. We choose the right side of five runners, see Table \ref{tab:co-motion} and model the relationship between the hip and knee angle data as follows, 
$$
    Y_{i,\text{hip}} (t) = \alpha_i (t) + \gamma_i (t) X_{i,\text{knee}} (t) + \epsilon_i (t), \quad \quad i = 1, \dots, n,
$$
where $n$ is the total sample size of the functional data for each individual. We run our algorithm for change point detection and find that changes occur in  the slope function for the runners at different parts  of the fatiguing run. The results of our analysis are recorded in Table \ref{tab:co-motion}. The most experienced runner (namely E) shows no change in both the norms. This is similar for the case of runner A who also does not show any change in the $L^2$-norm albeit with a change in the sup-norm.  The other runners who are less experienced show changes in both the sup- and $L^2$-norm. According to studies as in \cite{stergiou1999asynchrony}, \cite{stergiou2001dynamical} more ``in-phase" or coordinated motion between the hip-knee joints enables better shock absorption during running. A lack of change point in the slope function for experienced runners 
may suggest that greater training exposure is associated with a more stable
hip-knee coordination strategy and therefore remaining robust to fatigue. This is also supported in the interpretations of \cite{mo2019differences}.

Runner D shows an interesting result in that the change in $L^2$-norm is picked up about a 100 cycles later than the change in the sup-norm. We show the plot of the slope function before and after change in Figure \ref{fig:compareL2Sup} and it is seen that the sup-norm detects the change point earlier because the alteration in the slope function is initially sharp and localized. Such an (earlier-) localized change may reflect a rapid spike in co-motion variability at a particular part of the gait phase. The magnitude of this change is not picked up by the $L^2$-norm due to the averaging impact of the integral over the domain of the function. 
The later accumulation of global $L^2$ deviation suggests a more sustained reorganization in the hip-knee motion over the entire cycle as fatigue progresses. In prior literature (see \citealt{koblbauer2014kinematic}) it has been seen that fatigue causes runners - especially novices, to adopt a more forward-leaning posture that redistributes mechanical work from fatigued distal structures to the hip and trunk. This altered whole-body strategy modifies the coordination between the hip and knee joints wherein increased reliance on the hip extensors during propulsion is accompanied by compensatory adjustments at the knee. 
The results of an additional analysis of the right knee as regressor and right ankle as the dependent variable for all the runners are presented in the Supplementary Material.

\begin{table}[ht]
\centering \small
\begin{tabular}{|c|c|c||cc||cc|}
    \hline
    Runner Name & Experience (in years) & Sample size &
    \multicolumn{2}{c||}{$L^2$ norm} &
    \multicolumn{2}{c|}{sup-norm} \\
    \hline
    & & & CP slope & \% & CP slope & \% \\
    \hline
    A & 8  & 716  & -   &-  & 177    & 24.7  \\
    B & 3  & 829  & 187  & 22.56  &  184 & 22.2  \\
    C & 2  & 1099 & 255 &23.2    &255   &  23.2 \\
    D & 6  & 1063 & 368 & 34.61  &  273  &   25.7  \\
    E & 11 & 717  & - & -  & -  &  - \\
    \hline
\end{tabular}
\caption{Comparison of changes in slope under $L^2$ \& sup-norms as percentages of retained portion of the run. The (right) knee angle is modeled as the regressor while the (right) hip angle is modeled as the dependent variable. }
\label{tab:co-motion}
\end{table}

\begin{figure}
%\hspace{-1.6cm}
    \includegraphics[width=1\linewidth]{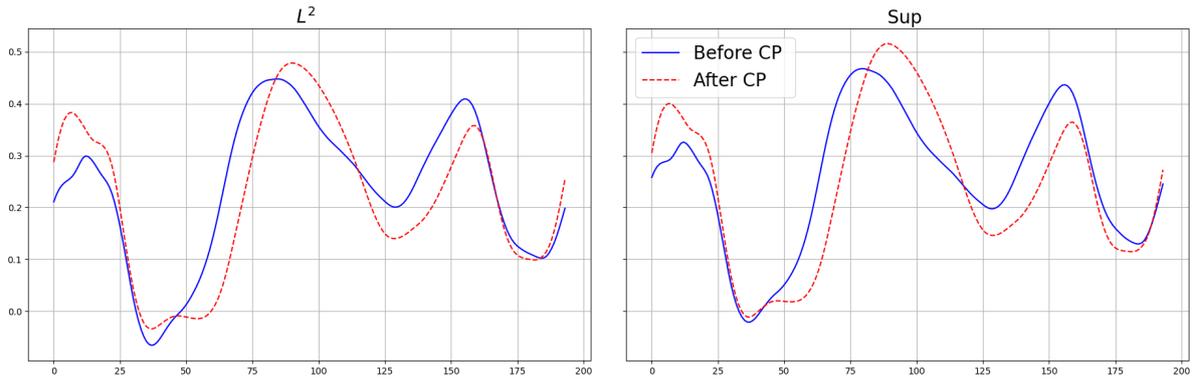}
    \caption{\textit{Left:} Slope comparison before and after the detected change point for Runner D based on the $L^2$-norm. \textit{Right:} Corresponding comparison using the sup-norm. The change point detected under the sup-norm occurs slightly earlier than that obtained with the $L^2$-norm.}
    \label{fig:compareL2Sup}
\end{figure}

\subsection{Symmetricity in joint movement on left and right side}

Under normal conditions of lack of fatigue, our gait movement should remain symmetric through the course of motion. However, this dataset is collected under conditions of fatigue for runners. This means that at some point during the course of the run, the athlete may experience moments of  asymmetry. This structural irregularity as a consequence of fatigue is modeled as a change point detection problem in a linear regression model for the left and right knee joint. We test if this is the case by considering the left knee angle to be the dependent variable and the right knee angle as the regressor variable as follows, 
$$
    Y_{i,\text{Lknee}} (t) = \alpha_i (t) + \gamma_i (t) X_{i,\text{Rknee}} (t) + \epsilon_i (t), \quad \quad i = 1, \dots, n. 
$$
We show that the application of our methods could potentially alert specific runners on aspects to improve when breaks in the slope function i.e., symmetricity of knee joints are reported.  The results of our analysis are recorded in Table \ref{tab:cp_slopeSymmetricity}. Two of the runners (A and E) show no changes with both $L^2$ and sup-norms. As discussed before, these are the most experienced runners. For runners C and D, changes are seen in both norms. Runner B is an exception with change found using $L^2$-norm and no change found using sup-norm. Such inter-limb movement asymmetries, particularly under fatigue, have been the focus of numerous biomechanical investigations (see \citealt{heil2020influence}, \citealt{radzak2017asymmetry}, \citealt{apte2021biomechanical}). Despite this extensive work, the literature consistently notes that key aspects of fatigue-related asymmetry are still not fully understood. \cite{gao2022effects} report that symmetry changes, in particular in the knee joint, due to fatigue may be a reason for running related injuries and running performance while fatigue related asymmetries in lower limbs, particularly in the case of amateur runners has been seen in \cite{gao2020effect}. 
An additional study for symmetricity between the left and right hip angle data for all the runners is presented in the Supplementary Material.

\begin{table}[htbp]
    \centering \small
    \begin{tabular}{|c|c||cc||cc|}
    \hline
       Runner & Sample size &
       \multicolumn{2}{c||}{$L^2$ norm} &
       \multicolumn{2}{c|}{sup-norm} \\
       \hline
       & & CP-slope & \%-run & CP-slope & \%-run \\
       \hline
       A & 716  &   - & -   &  -  & -\\
       B & 829 &   256 & 30.9  & -   & -  \\
       C & 1099  & 689 &  62.7 & 728  & 66.24  \\
       D & 1063 & 306   & 28.79  &  306  &  28.79 \\
       E & 717 &  -  & -  &   -  & -\\
       \hline
    \end{tabular}
    \caption{CP-slope results under $L^2$- and sup-norms testing for change in symmetricity on the left and right knees, with the left knee as the dependent variable. }
    \label{tab:cp_slopeSymmetricity}
\end{table}

\subsection{Varying-coefficient joint coupling}
\label{varyingcoeff-jointcoupling}
\begin{figure}[ht]
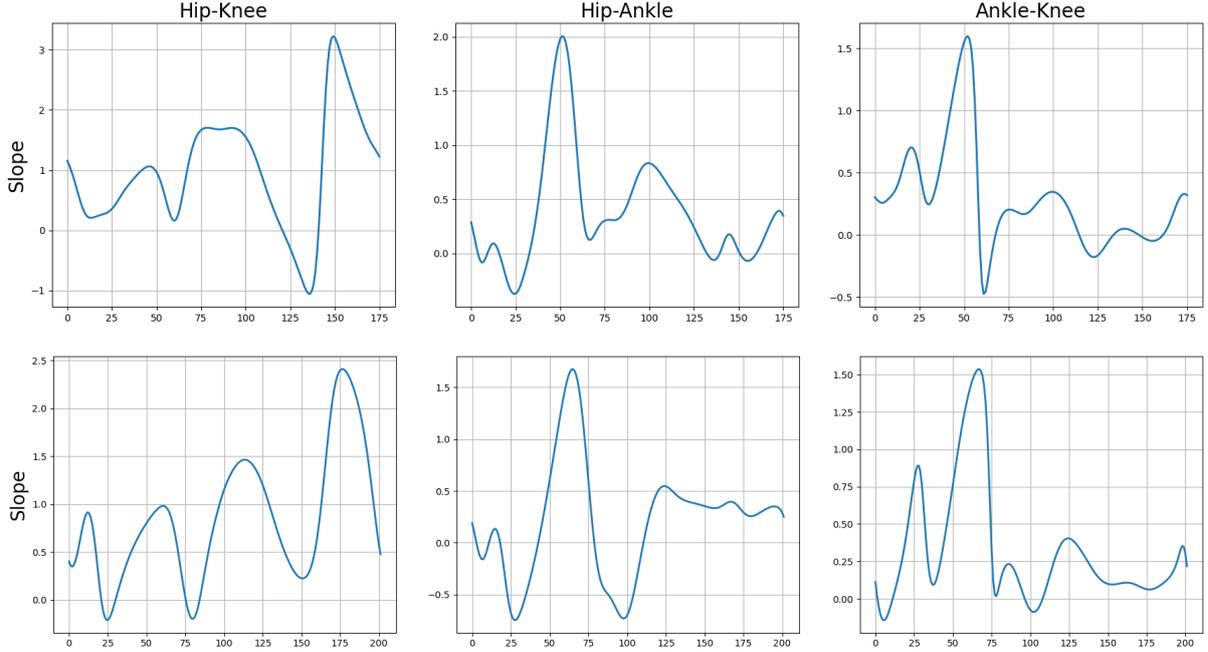

    \centering
    \includegraphics[width=\linewidth]{images/MostExpRunnerE.pdf}
\includegraphics[width =\linewidth]{images/LeastExpRunnerC.pdf}
\caption{\textit{Top row:} Coordination profiles of most experienced athlete (Runner E in this work). \textit{Bottom row: }profiles of least experienced athlete (Runner C). \textit{Both: }Slope functions with Hip-Knee (left panel), Hip-Ankle (middle panel) and Ankle-Knee (right panel) with the latter being the regressor for each combination.}
\label{CoordinationComparison}
\end{figure}

As an additional illustration, we take a step back from the change point inference methodology to the functional concurrent regression model. More precisely, we are interested in interpreting the biomechanical relevance of the parameters of the functional linear regression model, especially the slope function. Mathematically, the slope function characterizes the relationship between the regressor function, say the gait patterns from the knee with respect to the dependent variable, such as gait patterns from the hip. This provides a mathematical basis and quantification for biomechanical concepts like inter-limb coupling. The inter-limb movement during activities like running, walking etc.\ have been previously studied in \cite{davis2019assessing} and \cite{dierks2007discrete}. The coordination between lower extremity joints is interesting irrespective of whether it is fatiguing run or not. It provides a snapshot of each individual's style of movement during an activity. We provide in Figure \ref{CoordinationComparison} the slope function in Hip-Knee, Hip-Ankle and Ankle-Knee regression setups. The latter in each pair is always the regressor. We make a comparison between the extreme experience levels of two runners (C \& E) from a small, non-fatiguing portion of the run.  The application of the algorithm in \cite{bastian2024multiplechangepointdetection} showed that these runners have two change points and we choose the (stable) segment between the first and second changes. The experienced runner shows clear, stable, and well-timed coordination patterns between hip, knee, and ankle, reflected in smooth and distinct slope profiles across the gait cycle. In contrast, the novice runner exhibits more variable, diffuse, and weaker coupling, indicating less consistent inter-joint control. Overall, the profiles highlight how joint coordination may become more structured with increasing running experience. Naturally further intra-group level studies need to be performed in order to show conclusive evidence of such trends. Nonetheless, a slope function gives practitioners a clear, time-specific measure of joint coordination, helping them pinpoint inefficient or compensatory movement patterns and focus training or rehabilitation on the exact phases where coordination breaks down.

\section{Conclusion and outlook}

We developed a change point testing methodology for concurrent functional linear regression models with $C[0,1]$-valued time series. The proposed tests are based on a CUSUM process of regressor-weighted OLS residual functions, which avoids the well-known loss of power that can arise from CUSUM procedures applied to residuals alone when the regressor is centered. To accommodate different types of structural changes in the slope function, we introduced both an $L^2$-based statistic, which is sensitive to global deviations, and a sup-norm statistic, which is particularly suited for localized spike-like changes. Under Hölder regularity and weak dependence conditions, we established a functional strong invariance principle in $C[0,1]$ based on \cite{dehling1983limit}, derived the limiting null distributions as functionals of a $C[0,1]$-valued Brownian bridge, and showed consistency against a broad class of slope-instability alternatives. Practical implementation is enabled via long-run covariance estimation and an eigen-expansion based Monte Carlo procedure to obtain critical values.

Our simulations support these theoretical findings by showing satisfactory size control and highlighting complementary power properties: $\Q_{L^2}$ performs well for diffuse, global slope changes, whereas $\Q_{\sup}$ is more sensitive to localized changes. From a practitioner perspective, this suggests selecting the norm according to the anticipated change mechanism, or reporting both statistics when the nature of potential instability is unknown.

In the sports application, the fitted slope functions provide an interpretable, phase-specific description of inter-joint coupling over the gait cycle, and the change point procedure identifies when these coupling patterns shift during a fatiguing run. Future research in biomechanics could apply these methods to larger cohorts and varied running conditions (e.g., fatigue protocols, speed changes, surface types) to better understand how coordination patterns adapt. Extending the approach to multi-joint kinematics may reveal broader compensatory mechanisms that precede injury. Longitudinal studies could examine how training, rehabilitation, or technique interventions modify slope functions and change point patterns over time, providing actionable tools for coaches, clinicians, and performance scientists.

\section*{Acknowledgements}

This work was supported by the  
Deutsche Forschungsgemeinschaft (DFG) through the project titled: ``\textit{Modeling functional time series with dynamic factor structures and points of impact}'',  with project number 511905296.  Thanks to Patrick Bastian for helpful discussions and the team of \textit{Sports, Data, and Interaction\footnote{\url{http://www.sports-data-interaction.com}}},  in particular Robbert van Middelaar and Aswin Balasubramaniam for meticulously collecting, pre-processing and providing the data used in this work. We furthermore thank the Regional Computing Center of the University of Cologne (RRZK) for providing computing time on the DFG-funded (Funding number: INST 216/512/1FUGG) High Performance Computing (HPC) system CHEOPS.

\newpage

\begin{center}
{\large\bf SUPPLEMENTARY MATERIAL TO} \\ \vspace{2ex}
{\Large \textbf{Detecting Parameter Instabilities in Functional Concurrent Linear Regression}} \\
%{\large by Sven Otto and Nazarii Salish}
\end{center}

\setcounter{equation}{0} \renewcommand{\theequation}{A.\arabic{equation}}
\setcounter{lemma}{0} \renewcommand{\thelemma}{A.\arabic{lemma}}
\appendix

\section{Additional results from the sports data example}
\begin{table}[ht] \small
\centering
\begin{tabular}{|c|c|c||cc||cc|}
    \hline
    Runner Name & Experience (in years) & Sample size &
    \multicolumn{2}{c||}{$L^2$ norm} &
    \multicolumn{2}{c|}{sup-norm} \\
    \hline
    & & & CP slope & \% & CP slope & \% \\
    \hline
    A & 8  & 716   & -    & -&  -   &  -  \\
    B & 3  &  829 & -   & -   &  -  & -  \\
    C & 2  & 1099 &   300 & 27.3   &  300  & 27.3    \\
    D & 6  & 1063  &  212  & 19.9   &  216    & 20.3     \\
    E & 11 &  717  & -  &-& -   &-\\
    \hline
\end{tabular}
\caption{Comparison of changes in slope under $L^2$ \& sup-norms as percentages of retained portion of the run. The (right) knee angle is modelled as the regressor while the (right) ankle angle is modelled as the dependent variable. }
\label{tab:co-motionRKRA}
\end{table}

\begin{table}[ht] \small
\centering
\begin{tabular}{|c|c|c||cc||cc|}
    \hline
    Runner Name & Experience (in years) & Sample size &
    \multicolumn{2}{c||}{$L^2$ norm} &
    \multicolumn{2}{c|}{sup-norm} \\
    \hline
    & & & CP slope & \% & CP slope & \% \\
    \hline
    A & 8  & 716   &   163   &22.77  &    163  &22.77     \\
    B & 3  &  829 &   642  & 77.44    &    643 & 77.56    \\
    C & 2  & 1099 &   246  &   22.38  &  839   &76.34      \\
    D & 6  & 1063  &  270   &  25.4   &    235   &  22.10    \\
    E & 11 &  717  & -  & - & -   & -\\
    \hline
\end{tabular}
\caption{Comparison of changes in slope under $L^2$ \& sup-norms as percentages of retained portion of the run. The (right) hip angle is modelled as the regressor while the (left) hip angle is modelled as the dependent variable. }
\label{tab:symmetricityRHLH}
\end{table}

\section{Technical Proofs}

\subsection{Definitions and Auxiliary Results}

Let $(S,\tau)$ be a compact metric space and let $C(S)$ be the separable Banach space of continuous functions $f: S \to \mathbb R$, equipped with the sup-norm.
A $C(S)$-valued time series $(Z_i)_{i \in \mathbb Z}$ is called weakly stationary if, for all $i, j, k \in \mathbb Z$, and for all points $s,t \in S$ it holds that
$$
	\E\big(Z_i(s) \big) = \E\big(Z_{i+k}(s)\big), \quad \quad \E\big(Z_i(s)Z_j(t)\big) = \E\big(Z_{i+k}(s)Z_{j+k}(t)\big) < \infty.
$$
Define $\mathcal F_{-\infty}^i = \sigma(\ldots, Z_{i-2}, Z_{i-1}, Z_i)$ and $\mathcal F_{i+h}^\infty = \sigma(Z_{i+h}, Z_{i+h+1}, \ldots)$.
The time series $(Z_i)_{i \in \mathbb Z}$ is called $\alpha$-mixing (strong mixing) if 
$$
	\alpha(h) := \sup_{i \in \mathbb Z} \sup_{\substack{A \in \mathcal F_{-\infty}^i \\ B \in \mathcal F_{i+h}^\infty}} \big|P(A \cap B) - P(A)P(B)\big| \overset{h \to \infty}{\longrightarrow} 0,
$$
and $(Z_i)_{i \in \mathbb Z}$ is called $\beta$-mixing (absolutely regular) if 
$$
	\beta(h) := \sup_{i \in \mathbb Z} \E\bigg( \sup_{B \in \mathcal F_{i+h}^\infty} \big|P(B|\mathcal F_{-\infty}^i) - P(B)\big|  \bigg) \overset{h \to \infty}{\longrightarrow} 0.
$$
Since $\alpha(h) \leq \beta(h)$ (see, e.g., \citealt{doukhan1994}, §1.1 Proposition 1), $\beta$-mixing implies $\alpha$-mixing.
We say that $(Z_i)_{i \in \mathbb Z}$ is $\beta$-mixing of size $-\varphi_0$ if $\beta(h)=O(h^{-\varphi})$ for some $\varphi>\varphi_0$, as $h \to \infty$.
The mixing size is preserved under finite-lag measurable transformations of the form $Y_i=g(Z_i,\ldots,Z_{i-l})$ because, for all $h\ge l$,  $\beta_Y(h)\le \beta(h-l)$, where $\beta_Y(h)$ are the $\beta$-mixing coefficients of $(Y_i)_{i \in \mathbb Z}$.
Hence, if $(Z_i)_{i \in \mathbb Z}$ is $\beta$-mixing of size $-\varphi_0$, then $(Y_i)_{i \in \mathbb Z}$ is also.

The relation $\alpha(h) \leq \beta(h)$ implies that many mixing inequalities that hold for $\alpha$-mixing processes also apply to $\beta$-mixing processes. In particular, the following useful inequality from Lemma 3.11 in \cite{dehling2002empirical} carries over directly.

\begin{lemma} \label{lem:mixinginequality}
Let $(Z_i)_{i \in \mathbb Z}$ be a $C(S)$-valued $\beta$-mixing functional time series. 
For any $1 \leq m,p,q < \infty$ satisfying $m^{-1} + p^{-1} + q^{-1} = 1$ we have
\begin{align*}
	\big| \mathrm{Cov}(Z_i(s),  Z_{i+h}(t)) \big| \leq 10 \, \beta(h)^{1/m}\ \E(|Z_i(s)|^p)^{1/p}\ \E(|Z_{i+h}(t)|^q)^{1/q}, \qquad s,t \in S.
\end{align*}
\end{lemma}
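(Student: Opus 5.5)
The plan is to derive the inequality from the scalar covariance inequality for strong mixing processes and then transfer it to $\beta$-mixing. Fix $s,t\in S$ and $h\ge 0$, and define the real-valued random variables $U:=Z_i(s)$ and $V:=Z_{i+h}(t)$. Because evaluation $f\mapsto f(s)$ is continuous on $C(S)$, hence Borel measurable, $U$ is $\mathcal F_{-\infty}^i$-measurable and $V$ is $\mathcal F_{i+h}^\infty$-measurable. The dependence between $U$ and $V$ is therefore controlled by
$$
\alpha(\sigma(U),\sigma(V))\le \sup_{A\in\mathcal F_{-\infty}^i,\,B\in\mathcal F_{i+h}^\infty}|P(A\cap B)-P(A)P(B)|\le\alpha(h).
$$

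The key step is to apply the classical Davydov-type covariance inequality for two real random variables. This is exactly Lemma 3.11 in \cite{dehling2002empirical}, which states that for $m^{-1}+p^{-1}+q^{-1}=1$,
$$
|\mathrm{Cov}(U,V)|\le 10\,\alpha(\sigma(U),\sigma(V))^{1/m}\,\|U\|_p\,\|V\|_q .
$$
If $\|U\|_p$ or $\|V\|_q$ is infinite, the bound holds trivially. If I had to prove this myself, I would follow the standard route: truncate $U$ at level $a$ and $V$ at level $b$. The covariance of the bounded parts is at most $4ab\,\alpha$, by the elementary bound for indicator-generated sigma fields applied through the layer-cake representation. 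The tail parts are controlled with Hölder and Markov, e.g. $\E|U|\mathbbm 1\{|U|>a\}\le \|U\|_p^p a^{1-p}$. One then optimizes by choosing $a=\|U\|_p\alpha^{-1/p}$ and $b=\|V\|_q\alpha^{-1/q}$, which yields the constant $10$. This optimization is the only delicate part, and I would simply cite it rather than redo it.

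Finally, I would insert $\alpha(h)\le\beta(h)$ from \citealt{doukhan1994} and use the monotonicity of $x\mapsto x^{1/m}$. This gives the claimed bound uniformly in $s,t$ and $i$. No stationarity is required, since only the marginal moments of $Z_i(s)$ and $Z_{i+h}(t)$ enter.
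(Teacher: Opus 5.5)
Your proposal is correct and follows the same route as the paper: the paper applies the $\alpha$-mixing covariance inequality of Lemma~3.11 in \cite{dehling2002empirical} to the scalar variables $Z_i(s)$ and $Z_{i+h}(t)$, then transfers it to the $\beta$-mixing setting via $\alpha(h)\le\beta(h)$ and the monotonicity of $x\mapsto x^{1/m}$. Your remarks on the measurability of point evaluations and your truncation sketch (whose four-term decomposition does produce the constant $10$) only spell out details the paper leaves implicit.
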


\noindent
The following result forms the cornerstone of our asymptotic theory and is an immediate consequence of Theorems 3 and 6 in \cite{dehling1983limit} applied to $C(S)$-valued processes.

\begin{lemma} (\citealt{dehling1983limit})
\label{lem:DehlingIP}
Let $(S,\tau)$ be a compact metric space and $(Z_i)_{i \in \mathbb Z}$ a weakly stationary $C(S)$-valued time series with $\E(Z_i(s)) = 0$ for all $s \in S$ and, for some $q > 2$, $\sup_{i \in \mathbb Z} \E(\sup_{s \in S} |Z_i(s)|^q) < \infty$.
Assume the sequence is $\beta$-mixing of size $-q/(q-2)$.
Let $N_\tau(S,\epsilon)$ be the covering number of $(S,\tau)$, defined as the minimum number of $\epsilon$-balls needed to cover $S$, fix $g(\cdot)$ with $N_\tau(S,g(m))=m$, and suppose
\begin{align} \label{eq:DehlingTightnessCondition}
	\sup_{n \in \mathbb N} \E\Bigg( \sup_{\tau(s,t) \leq g(m)} \Big| n^{-1/2} \sum_{i=1}^{n} (Z_i(s) - Z_i(t)) \Big|^2 \Bigg) = O(m^{-\kappa}) \quad \text{for some} \ \kappa > 0.
\end{align}
Then there is a probability space on which one can redefine $(Z_i)_{i \in \mathbb Z}$ together with a $C(S)$-valued Brownian motion $B(k,s)$, $k \geq 0$, $k \in \mathbb Z$, $s \in S$, such that
\begin{align*}
      \sup_{s \in S}  \bigg|\sum_{i=1}^{\lfloor k \rfloor} Z_i(s)- B(k,s)\bigg| = O(k^{1/2 - \kappa}) \quad a.s., \quad \text{for some} \  \kappa > 0, \quad (k \rightarrow\infty),
\end{align*}
with $\mathrm{Cov}(B(v,s), B(w,t)) = \min\{v,w\} C(s,t)$ and $C(s,t) = \sum_{h=-\infty}^\infty \mathrm{Cov}(Z_0(s), Z_h(t))$. 
\end{lemma}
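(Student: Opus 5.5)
The plan is to reduce the lemma to the abstract Banach-space almost sure invariance principle of \cite{dehling1983limit}, taking $B=C(S)$. His Theorem~3 handles the finite-dimensional part, and his Theorem~6 passes to the infinite-dimensional process when the sequence can be approximated by linear finite-rank maps uniformly in $n$. The work is therefore to build these finite-rank approximations explicitly in $C(S)$, and to translate the tightness hypothesis \eqref{eq:DehlingTightnessCondition} into Dehling's approximation condition.

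\textbf{Step 1 (finite-rank approximations).} For each $m$, use $N_\tau(S,g(m))=m$ to choose centres $s_1,\dots,s_m$ whose $g(m)$-balls cover $S$. Take a continuous partition of unity $\{\psi_j\}_{j=1}^m$ subordinate to the enlarged balls $B(s_j,2g(m))$, for instance $\psi_j\propto (2g(m)-\tau(\cdot,s_j))_+$, which is well defined because every point lies within $g(m)$ of some centre. Define the linear operator $\pi_m f:=\sum_{j}f(s_j)\psi_j$. Then $\|f-\pi_m f\|_\infty\le \sup_{\tau(s,t)\le 2g(m)}|f(s)-f(t)|$.

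Since $\pi_m$ is linear, it commutes with partial sums. Hence
$$
\E\Big\|n^{-1/2}\sum_{i\le n}(Z_i-\pi_m Z_i)\Big\|_\infty^2\le \E\Big(\sup_{\tau(s,t)\le 2g(m)}\Big|n^{-1/2}\sum_{i\le n}(Z_i(s)-Z_i(t))\Big|^2\Big).
$$
The scale $2g(m)$ rather than $g(m)$ is harmless: $N_\tau(S,g(m))=m$ lets one compare $2g(m)$ with $g(m')$ for some $m'\asymp m^{c}$, which changes only the value of $\kappa$. Condition \eqref{eq:DehlingTightnessCondition} then gives exactly Dehling's requirement that the approximation error be $O(m^{-\kappa})$ uniformly in $n$.

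\textbf{Step 2 (finite-dimensional invariance principle).} The $\mathbb R^m$-valued sequence $(Z_i(s_1),\dots,Z_i(s_m))$ is a measurable function of $Z_i$ alone. By the preservation remark above, it is therefore $\beta$-mixing of the same size $-q/(q-2)$, it is centred, and it has uniformly bounded $q$-th moments. Dehling's Theorem~3 then gives a strong invariance principle for it with a polynomial rate. The dependence of that rate on $m$ is what Theorem~6 balances against the $O(m^{-\kappa})$ approximation error when $m=m(k)$ grows polynomially in $k$. The outcome is the stated $O(k^{1/2-\kappa})$ rate for a possibly smaller $\kappa>0$.

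\textbf{Step 3 (covariance and the limit process).} Apply Lemma~\ref{lem:mixinginequality} with exponents $p=q$, $q$, and $m=q/(q-2)$. This gives $|\mathrm{Cov}(Z_0(s),Z_h(t))|\le 10\,\beta(h)^{(q-2)/q}\sup_i\E\|Z_i\|_\infty^{q\,\cdot\,2/q}$. Because $\beta(h)=O(h^{-\varphi})$ with $\varphi>q/(q-2)$, the exponent $\varphi(q-2)/q$ exceeds $1$. So the series defining $C(s,t)$ converges absolutely and uniformly in $(s,t)$, and $C$ is continuous and positive semidefinite.

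Applying \eqref{eq:DehlingTightnessCondition} to the Gaussian limits shows that the centred Gaussian process with covariance $C$ has a version with continuous paths, i.e.\ it is a $C(S)$-valued Gaussian element. This makes the Brownian motion $B(k,\cdot)$ well defined with $\mathrm{Cov}(B(v,s),B(w,t))=\min\{v,w\}C(s,t)$. Identifying its covariance with the limit covariance of the partial sums is then routine.

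\textbf{Main obstacle.} I expect the hard part to be Step~1 together with its interface to Theorem~6. One must verify that the partition-of-unity projection and the uniform-in-$n$ second-moment bound match Dehling's hypotheses precisely, in particular the polynomial trade-off between the dimension $m$ and the approximation error. The remaining steps are direct applications of known results.
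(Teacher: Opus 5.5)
Your plan matches the paper's. The paper gives no argument beyond invoking Theorems~3 and~6 of \cite{dehling1983limit} with $B=C(S)$, and your Steps~1--3 unpack the interface those theorems already provide; Step~3 is the same mixing-inequality bound the paper uses at the start of the proof of Lemma~\ref{lem:SIP}.

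The one loose claim is in Step~1. Absorbing the factor $2$ via some $m'\asymp m^{c}$ would need $N_\tau(S,2\epsilon)\gtrsim N_\tau(S,\epsilon)^{c}$, which fails for some compact $S$ (though it holds for $([0,1],|s-t|^\eta)$). You can avoid the comparison altogether. Since the open $g(m)$-balls already cover $S$, take $\psi_j\propto(g(m)-\tau(\cdot,s_j))_+$; its normalising sum is still strictly positive, and you get $\|f-\pi_m f\|_\infty\le\sup_{\tau(s,t)\le g(m)}|f(s)-f(t)|$ with no change of scale.
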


\noindent
The next result is a Rosenthal-type inequality for $\alpha$-mixing sequences, which is shown in Lemma S.2 of \cite{fan2023} as a consequence of Theorem 6.3 in \cite{rio2017}. The relation $\alpha(h) \leq \beta(h)$ implies that it carries over directly to the $\beta$-mixing coefficient:

\begin{lemma}[\citealt{fan2023}, Lemma S.2] \label{lem:Rosenthalinequality}
Let $r\ge2$ and $q>r$. Let $(x_i)_{i\in\mathbb Z}$ be a real-valued mean-zero time series with $\beta$-mixing coefficients $\beta(h)$. Suppose $\max_{1\le i\le n}\E(|x_i|^{q})<\infty$. Then, for every $n \ge 1$,
$$
\E\bigg(\Big|\sum_{i=1}^{n} x_i\Big|^{r}\bigg)^{1/r} \leq 
C_{r,q} 
\Bigg(\sum_{h=0}^{n-1} (h+1)^{r-2} \beta(h)^{1-r/q}\Bigg)^{1/r}
 \Big(\max_{1\le i\le n} \E(|x_i|^{q})\Big)^{1/q} \sqrt{n},
$$
where the constant $C_{r,q}< \infty$ depends only on $r$ and $q$.
\end{lemma}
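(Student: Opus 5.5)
The plan is to derive the inequality from Rio's Rosenthal inequality for strongly mixing sequences (Theorem~6.3 in \cite{rio2017}), stated in terms of quantile functions, and then turn the quantile integral into the displayed mixing sum. Throughout, write $\alpha(h)$ for the strong mixing coefficients of $(x_i)$, and set $\alpha^{-1}(u)=\sum_{h\ge0}\mathbbm 1\{u<\alpha(h)\}$. Let $Q$ be the upper-tail quantile function of $\max_{1\le i\le n}|x_i|$ (Rio's bound allows a common dominating quantile function in the nonstationary case), and put $a_n(u)=\alpha^{-1}(u)\wedge n$.

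\textbf{Step 1: apply Rio's bound.} For $r\ge2$, Rio's Theorem~6.3 gives
$$
\E\Big|\sum_{i=1}^n x_i\Big|^r\le c_r\, s_n^r + c_r\, n\int_0^1 a_n(u)^{r-1}Q(u)^r\,du,
$$
where $s_n^2\le 4n\int_0^1 a_n(u)Q(u)^2\,du$.

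\textbf{Step 2: bound the main integral by the mixing sum.} Since $a_n(u)\ge1$ on the relevant range, the elementary bound $m^{r-1}\le (r-1)\sum_{h=0}^{m-1}(h+1)^{r-2}$ with $m=a_n(u)$ gives
$$
\int_0^1 a_n^{r-1}Q^r\,du\le (r-1)\sum_{h=0}^{n-1}(h+1)^{r-2}\int_0^{\alpha(h)}Q(u)^r\,du .
$$
Hölder's inequality with exponents $q/r$ and $q/(q-r)$ then yields
$$
\int_0^{\alpha(h)}Q^r\,du\le \alpha(h)^{1-r/q}\Big(\int_0^1Q^q\,du\Big)^{r/q}.
$$
The remaining moment factor is controlled by $\int_0^1 Q^q\,du=\E(\max_i|x_i|^q)\le n\max_i\E|x_i|^q$. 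Here I expect the main technical obstacle: a crude bound on the maximum would cost a power of $n$. To avoid this I would instead use the version of Rio's theorem with $Q=\max_i Q_{x_i}$, or equivalently the averaged quantile $n^{-1}\sum_i Q_{x_i}$, for which $\int Q^q\le \max_i\E|x_i|^q$ holds directly by convexity. This is what yields the clean factor $(\max_i\E|x_i|^q)^{1/q}$.

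\textbf{Step 3: handle the variance term.} Factor $a_nQ^2=(a_n^{r-1}Q^r)^{2/r}\,a_n^{(2-r)/r}$. Because $a_n\ge1$ and $r\ge2$, the second factor is at most one, and Hölder's inequality on $[0,1]$ gives
$$
s_n^2\le 4n\Big(\int_0^1 a_n^{r-1}Q^r\,du\Big)^{2/r}.
$$
Both terms in Step~1 are then at most a constant times $n^{r/2}\sum_{h=0}^{n-1}(h+1)^{r-2}\alpha(h)^{1-r/q}\,(\max_i\E|x_i|^q)^{r/q}$, using $n\le n^{r/2}$ for the second term.

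\textbf{Step 4: conclude.} Taking $r$-th roots gives the displayed inequality with $\alpha(h)$ in place of $\beta(h)$. Finally, since $\alpha(h)\le\beta(h)$ and $1-r/q>0$, we have $\alpha(h)^{1-r/q}\le\beta(h)^{1-r/q}$, which gives the stated bound with $C_{r,q}$ depending only on $r$ and $q$.
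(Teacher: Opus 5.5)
Your overall route is the one the paper relies on. The paper quotes the $\alpha$-mixing bound of Fan et al.\ (derived there from Rio's Theorem~6.3) and then passes to $\beta(h)$ through $\alpha(h)\le\beta(h)$, exactly as in your Step~4. Steps~1, 3 and~4 are fine. The gap is in how you resolve the obstacle in Step~2.

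Rio's bound needs a single nonincreasing $Q$ with $Q\ge Q_{x_i}$ pointwise for every $i$. The smallest admissible choice, $Q=\sup_iQ_{x_i}$, does \emph{not} satisfy $\int_0^1Q^q\,du\le\max_i\E|x_i|^q$. For example, let $x_i=\pm b_i^{-1/q}$ with probability $b_i/2$ each and $x_i=0$ otherwise, where $b_i=2^{i-n}$. Then $\E|x_i|^q=1$ for every $i$. However, $\sup_iQ_{x_i}=b_j^{-1/q}$ on $[b_{j-1},b_j)$ (with $b_0=0$), so $\int_0^1(\sup_iQ_{x_i})^q\,du=\sum_{j=1}^n(1-b_{j-1}/b_j)=(n+1)/2$. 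The averaged quantile $n^{-1}\sum_iQ_{x_i}$ does obey the convexity bound, but it does not dominate the individual $Q_{x_i}$, so Rio's theorem cannot be applied with it. The two choices are therefore not equivalent. As written, Step~2 either loses a factor $n^{1/q}$ or is unjustified. That loss would destroy the uniformity in $n$ that is needed where the lemma is used in the proof of Lemma~\ref{lem:SIP}.

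The fix is to avoid $\int_0^1Q^q$ altogether and bound the quantile functions pointwise. By Markov's inequality, $Q_{x_i}(u)\le(\E|x_i|^q/u)^{1/q}$. Set $M:=\max_{1\le i\le n}(\E|x_i|^q)^{1/q}$. Then the admissible envelope $Q=\sup_iQ_{x_i}$ satisfies $Q(u)\le Mu^{-1/q}$, and hence
\[
\int_0^{\alpha(h)}Q(u)^r\,du\le M^r\int_0^{\alpha(h)}u^{-r/q}\,du=\frac{q}{q-r}\,M^r\,\alpha(h)^{1-r/q}.
\]
This uses only $r<q$ and costs the constant $q/(q-r)$. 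Alternatively, if you work with a form of Rio's inequality whose last term is $\sum_{i=1}^n\int_0^1a_n^{r-1}Q_{x_i}^r\,du$, you can apply H\"older to each $i$ separately. With this replacement in Step~2, your Steps~3 and~4 go through unchanged. They yield the stated inequality with a constant depending only on $r$ and $q$.
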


\noindent
The following maximal inequality follows as a special case from Theorem 2.2.4 in \cite{vandervaart2023} applied to the Orlicz norm with Young function $\psi(x) = x^r$ and inverse $\psi^{-1}(x) = x^{1/r}$ for any $r \geq 1$:
\begin{lemma}[\citealt{vandervaart2023}, Theorem 2.2.4] \label{lem:vandervaart}
Let $(S,\tau)$ be a compact metric space and let $R$ be a $C(S)$-valued random variable with
\begin{align}\label{eq:vandervaartCondition}
	\E\Big(|R(s) - R(t)|^r\Big)^{1/r} \leq C \tau(s,t), \quad \text{for all} \ s,t\in S \ \text{and some} \ C< \infty, \ r\geq 1.
\end{align}
Then, for any $\nu > 0$ and $\delta > 0$,
$$
	\E\bigg( \sup_{\tau(s,t) \leq \delta} |R(s) - R(t)|^r \bigg)^{1/r} \leq K \left[\int_0^\nu  (D(\epsilon))^{1/r} \, d\epsilon + \delta (D(\nu))^{2/r} \right],
$$
where $K < \infty$ is a constant.
Here, $D(\epsilon)$ denotes the packing number, defined as the maximum number of $\epsilon$-separated points in $(S,\tau)$.
\end{lemma}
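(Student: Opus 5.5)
The plan is to run the standard chaining argument of \cite{vandervaart2023} with the Orlicz norm $\|\cdot\|_\psi$ replaced by the $L^r$-norm $\|\cdot\|_r:=\E(|\cdot|^r)^{1/r}$. The only property of $\psi(x)=x^r$ that is needed is the finite maximal inequality
$$
\Big\|\max_{1\le k\le N}|V_k|\Big\|_r \le N^{1/r}\max_{1\le k\le N}\|V_k\|_r,
$$
which holds because $\E\max_k|V_k|^r\le\sum_k\E|V_k|^r$. Here $N^{1/r}=\psi^{-1}(N)$.

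Since $R$ has continuous paths and $S$ is compact, the supremum over $\{\tau(s,t)\le\delta\}$ equals a supremum over a countable dense subset. By monotone convergence it therefore suffices to prove the bound for an arbitrary finite set $T\subset S$, with a constant $K$ that does not depend on $T$. We may also assume $\nu$ is at most the diameter of $S$; otherwise the statement is trivial after adjusting $K$.

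\textbf{Chaining skeleton.} Set $\epsilon_j=\nu2^{-j}$ for $j\ge0$. Construct nested maximal $\epsilon_j$-separated sets $T_0\subset T_1\subset\cdots\subset T$, so that $|T_j|\le D(\epsilon_j)$. For finite $T$ the last set $T_J$ equals $T$. Every $t\in T_{j+1}$ is linked to a point $\pi_j t\in T_j$ with $\tau(t,\pi_jt)\le\epsilon_j$. There are at most $D(\epsilon_{j+1})$ such links at level $j$, and by \eqref{eq:vandervaartCondition} each link increment satisfies $\|R(t)-R(\pi_jt)\|_r\le C\epsilon_j$.

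Applying the maximal inequality level by level and summing over $j$ gives
$$
\Big\|\max_{t\in T}|R(t)-R(\pi t)|\Big\|_r\le \sum_j D(\epsilon_{j+1})^{1/r}C\epsilon_j\lesssim C\int_0^\nu D(\epsilon)^{1/r}d\epsilon,
$$
where $\pi t\in T_0$ denotes the endpoint of the chain started at $t$. The last step uses the usual comparison of a dyadic sum with an integral, which works because $D$ is monotone.

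\textbf{Reducing to finitely many pairs.} Take $s,t\in T$ with $\tau(s,t)\le\delta$. Then
$$
|R(s)-R(t)|\le 2\max_{u\in T}|R(u)-R(\pi u)|+|R(\pi s)-R(\pi t)|.
$$
The first term is handled by the chaining bound above. The second term ranges over at most $D(\nu)^2$ pairs in $T_0\times T_0$. The delicate point, and the main obstacle, is that $\tau(\pi s,\pi t)$ may be as large as $\delta+2\nu$ rather than $\delta$.

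I follow van der Vaart's device here. For each pair $(a,b)\in T_0^2$ that arises from some $(s,t)$ with $\tau(s,t)\le\delta$, fix one such representative pair $(s_{ab},t_{ab})$. Then
$$
|R(a)-R(b)|\le|R(a)-R(s_{ab})|+|R(s_{ab})-R(t_{ab})|+|R(t_{ab})-R(b)|.
$$
The two outer differences are chain increments, so they are again bounded by twice the chaining term. The middle term is a maximum over at most $D(\nu)^2$ variables, each with $L^r$-norm at most $C\delta$. The maximal inequality therefore gives the bound $D(\nu)^{2/r}C\delta$.

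\textbf{Conclusion.} Collecting the pieces,
$$
\Big\|\sup_{\tau(s,t)\le\delta}|R(s)-R(t)|\Big\|_r\le K\Big[\int_0^\nu D(\epsilon)^{1/r}d\epsilon+\delta D(\nu)^{2/r}\Big],
$$
with $K$ a universal multiple of $C$. Passing from finite $T$ to all of $S$ by monotone convergence completes the proof.
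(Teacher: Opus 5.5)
Your proposal is correct and follows the paper's route. The paper gives no argument of its own; it simply invokes Theorem~2.2.4 of \cite{vandervaart2023} with $\psi(x)=x^r$. What you write is exactly that theorem's chaining proof: nested maximal $\nu2^{-j}$-separated sets, a level-wise maximal inequality, and one fixed representative pair with $\tau\le\delta$ attached to each pair of chain endpoints in $T_0^2$. The only change is that you replace their Orlicz-norm maximal inequality by the elementary bound $\E\max_{k\le N}|V_k|^r\le\sum_{k}\E|V_k|^r$.

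One small slip: each chain has length below $2\nu$, so $\tau(\pi s,\pi t)$ can be as large as $\delta+4\nu$, not $\delta+2\nu$. Your argument never uses this bound, so nothing is affected.
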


\subsection{Strong invariance principle}

\begin{lemma} \label{lem:SIP}
	Let $(Z_i)_{i \in \mathbb Z}$ be a $C[0,1]$-valued weakly stationary functional time series with $\E(Z_i(t)) = 0$ for $t \in [0,1]$ such that, for some Hölder exponent $\eta \in (0,1]$ and moment $q > \max\{2,1/\eta\}$, $\sup_{i \in \mathbb Z } \E([Z_i]_\eta^q) < \infty$ and $\sup_{i \in \mathbb Z} \E(\|Z_i\|_\infty^q) < \infty$.
	Assume $(Z_i)_{i \in \mathbb Z}$ is $\beta$-mixing of size $-q/(q-2)$ if $\eta > 1/2$ and of size $-q(1-\eta)/(q\eta - 1)$ if $\eta \leq 1/2$.
Then, there exists a probability space on which we can redefine $(Z_i)_{i \in \mathbb Z}$ together with a $C[0,1]$-valued Brownian motion $B(k,t)$ with
$$
	\mathrm{Cov}(B(k,s), B(l,t)) = \min\{k,l\} C(s,t), \qquad C(s,t) = \sum_{h=-\infty}^\infty \mathrm{Cov}(Z_0(s), Z_h(t)),
$$
where $k,l \geq 0$, $s,t \in [0,1]$, 
such that for some $\kappa > 0$,
\begin{align*}
&\sup_{t \in [0,1]} \left|\sum_{i=1}^{\lfloor k \rfloor} Z_i(t) - B(k,t)\right| = O(k^{1/2 - \kappa}), \qquad \text{a.s.} \quad (k \to \infty). 
\end{align*}
\end{lemma}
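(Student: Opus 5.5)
The plan is to apply Lemma~\ref{lem:DehlingIP} with $S=[0,1]$ and a metric $\tau$ tailored to the Hölder exponent, and to verify its tightness condition \eqref{eq:DehlingTightnessCondition} through the maximal inequality in Lemma~\ref{lem:vandervaart}. The moment and mixing hypotheses of Lemma~\ref{lem:DehlingIP} come directly from the statement: $\sup_i\E(\|Z_i\|_\infty^q)<\infty$ with $q>2$, and the mixing size $-q/(q-2)$ holds in the case $\eta>1/2$. In the case $\eta\le 1/2$ we need $q(1-\eta)/(q\eta-1)\ge q/(q-2)$, which reduces to $q(1-\eta)(q-2)\ge q(q\eta-1)$, i.e.\ $q-2q\eta-1+2\eta\ge0$, i.e.\ $(q-1)(1-2\eta)\ge 0$. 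This holds, so the only real work is the tightness condition.

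Fix $r=2$ when $\eta>1/2$; when $\eta\le1/2$ take $r$ slightly larger than $1/\eta$ with $r<q$. Set $R_n(t)=n^{-1/2}\sum_{i=1}^n Z_i(t)$ and $\tau(s,t)=|s-t|^{\theta}$ for some $\theta\in(0,\eta]$ to be chosen. The first step is the increment bound \eqref{eq:vandervaartCondition}. The increments $x_i=Z_i(s)-Z_i(t)$ form a mean-zero, $\beta$-mixing real sequence with $\E|x_i|^q\le |s-t|^{\eta q}\sup_i\E([Z_i]_\eta^q)$. Lemma~\ref{lem:Rosenthalinequality}, applied with exponents $r<q$, then gives
\[
\E(|R_n(s)-R_n(t)|^r)^{1/r}\le C_{r,q}\Big(\sum_{h\ge0}(h+1)^{r-2}\beta(h)^{1-r/q}\Big)^{1/r}|s-t|^\eta .
\]
This bound is uniform in $n$ provided the series converges, which needs $\beta(h)=O(h^{-\varphi})$ with $\varphi(1-r/q)>r-1$. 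For $r=2$ this is exactly size $-q/(q-2)$. For $r\downarrow 1/\eta$ the condition becomes $\varphi>(1/\eta-1)/(1-1/(q\eta))=q(1-\eta)/(q\eta-1)$, which matches the assumed size, so some admissible $r>1/\eta$ exists. The increment bound is $C|s-t|^\eta\le C\tau(s,t)$ with $\theta=\eta$.

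Next, take the metric $\tau(s,t)=|s-t|^\eta$. Its packing number is $D(\epsilon)\asymp\epsilon^{-1/\eta}$. The entropy integral $\int_0^\nu\epsilon^{-1/(r\eta)}d\epsilon$ is finite because $r\eta>1$; this is why $r>1/\eta$ is needed, and for $\eta>1/2$ the choice $r=2$ already suffices. It equals $\nu^{1-1/(r\eta)}$ up to constants. Lemma~\ref{lem:vandervaart} then gives
\[
\E\big(\sup_{\tau(s,t)\le\delta}|R_n(s)-R_n(t)|^r\big)^{1/r}\lesssim \nu^{1-1/(r\eta)}+\delta\,\nu^{-2/(r\eta)} .
\]
Choosing $\nu=\delta^{a}$ with $a=r\eta/(r\eta+1)$ balances the two terms and yields a bound $\delta^{b}$ with $b>0$, uniformly in $n$. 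Since $N_\tau([0,1],\epsilon)\asymp\epsilon^{-1/\eta}$, we have $g(m)\asymp m^{-\eta}$. For $r\ge2$, Jensen's inequality passes from the $r$-th to the second moment, giving $O(m^{-2\eta b})$, which is \eqref{eq:DehlingTightnessCondition} with $\kappa=2\eta b$. When $1/\eta<2$ we simply take $r=2$, so $r\ge 2$ holds in every case.

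The main obstacle is the bookkeeping in the first step. We must check that Lemma~\ref{lem:Rosenthalinequality} applies with an $r$ that is simultaneously $\ge2$, $>1/\eta$ and $<q$, and that the assumed mixing size makes $\sum_h(h+1)^{r-2}\beta(h)^{1-r/q}$ summable. The sizes in the hypothesis appear engineered so that this works with the strict inequality $\varphi>\varphi_0$ absorbing the small slack in $r$. Once tightness holds, Lemma~\ref{lem:DehlingIP} delivers the Brownian motion with the stated covariance and the a.s.\ rate $O(k^{1/2-\kappa})$.
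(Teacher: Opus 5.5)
Your proposal is correct and follows essentially the same route as the paper: Lemma~\ref{lem:DehlingIP} with $\tau(s,t)=|s-t|^\eta$, the Rosenthal bound on increments with $r=2$ or $r\in(1/\eta,q)$ chosen so that $\varphi(1-r/q)>r-1$, and Lemma~\ref{lem:vandervaart} with $\nu=\delta^{r\eta/(r\eta+1)}$, which gives $\kappa=2\eta(r\eta-1)/(r\eta+1)$. You also check explicitly that $q(1-\eta)/(q\eta-1)\ge q/(q-2)$ when $\eta\le 1/2$, so the mixing hypothesis of Lemma~\ref{lem:DehlingIP} holds in that case; the paper leaves this step implicit.
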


\begin{proof}
First, note that by Lemma \ref{lem:mixinginequality} and the moment and mixing assumptions, the series defining $C(s,t)$ is absolutely convergent for all $s,t\in[0,1]$.
We will apply Lemma \ref{lem:DehlingIP} for $S = [0,1]$ together with the metric $\tau_\eta(s, t) = |s-t|^{\eta}$.
The Hölder moment condition implies that, for some $K < \infty$ and all $s,t \in [0,1]$,
\begin{align} \label{eq:IPtheorem.hölderbound}
\sup_{i \in \mathbb Z} \E\big(|Z_i(s) - Z_i(t)|^{q}\big)^{1/q} \leq \sup_{i \in \mathbb Z} \E\big([Z_i]_\eta^q \big)^{1/q} |s-t|^{\eta} \leq K \tau_\eta(s,t).
\end{align}

It remains to verify condition \eqref{eq:DehlingTightnessCondition}.
For the compact metric space $([0,1], \tau_\eta)$, the covering number is given by $N_{\tau_\eta}(S,\epsilon) = C_N \epsilon^{-1/\eta}$ for some constant $C_N$ and the packing number is $D(\epsilon) = C_D \epsilon^{-1/\eta}$ for some constant $C_D$ (see, e.g., Section 2 of \citealt{vandervaart2023} for a detailed treatment of covering and packing numbers). The inverse function $g$ satisfying $N_{\tau_\eta}(S,g(m))=m$ is given by $g(m) = C_N^\eta m^{-\eta}$.

Consider the partial sum process $R_n(s) := n^{-1/2} \sum_{i=1}^{n} Z_i(s)$ for $s \in [0,1]$.
For any $r \in [2,q)$, Lemma \ref{lem:Rosenthalinequality} and \eqref{eq:IPtheorem.hölderbound} imply that
\begin{align*}
	\E\big(|R_n(s) - R_n(t)|^r\big)^{1/r}  
	&= \frac{1}{\sqrt n} \E\bigg( \Big| \sum_{i=1}^n (Z_i(s) - Z_i(t)) \Big|^r \bigg)^{1/r} \\
	&\leq C_{r,q} \bigg( \sum_{h=0}^{n-1} (h+1)^{r-2} \beta(h)^{1-r/q} \bigg)^{1/r} \Big(\max_{1\le i\le n} \E(|Z_i(s) - Z_i(t)|^{q})\Big)^{1/q} \\
	&\leq C_{r,q} K \tau_\eta(s,t) \bigg( \sum_{h=0}^{n-1} (h+1)^{r-2} \beta(h)^{(q-r)/q} \bigg)^{1/r}.
\end{align*}
For the case $\eta > 1/2$ set $r = 2$. By the assumed mixing size, $\beta(h)=O(h^{-\varphi})$ for some $\varphi>q/(q-2)$, which implies
$$
	\sum_{h=0}^{n-1} (h+1)^{r-2} \beta(h)^{(q-r)/q} \leq \sum_{h=0}^{\infty} \beta(h)^{(q-2)/q} < \infty.
$$
For the case $\eta \leq 1/2$, by the assumed mixing size, there exists $\varphi > q(1-\eta)/(q\eta-1)$ such that $\beta(h) = O(h^{-\varphi})$. Fix any 
$r \in (1/\eta, \min\{q,q(1+\varphi)/(q+\varphi)\})$. The interval is nonempty because $q(1+\varphi)/(q+\varphi) > 1/\eta$ is equivalent to $\varphi > q(1-\eta)/(q\eta-1)$, and $q > 1/\eta$ holds by assumption. The condition $r < q(1+\varphi)/(q+\varphi)$ implies $\varphi(q-r)/q > r-1$. Then, for some $\tilde C <\infty$,
$$
	\sum_{h=0}^{n-1} (h+1)^{r-2} \beta(h)^{(q-r)/q} \leq \tilde C \sum_{h=1}^{\infty} h^{r-2-\varphi(q-r)/q} < \infty.
$$
Hence, condition \eqref{eq:vandervaartCondition} of Lemma \ref{lem:vandervaart} is satisfied for our specific choice of $r$ depending on $\eta$ with the property that $r\eta > 1$.
Therefore, by the monotonicity of $L^r$-norms and Lemma \ref{lem:vandervaart},
when setting $\nu = \delta^{r\eta/(r\eta+1)}$,
\begin{align*}
\sup_{n \in \mathbb N} \E\bigg( \sup_{\tau_\eta(s,t) \leq \delta} |R_n(s) - R_n(t)|^2 \bigg)^{1/2} 
&\leq\sup_{n \in \mathbb N}	\E\bigg( \sup_{\tau_\eta(s,t) \leq \delta} |R_n(s) - R_n(t)|^r \bigg)^{1/r} \\
&\leq  K \left[\int_0^\nu  (D(\epsilon))^{1/r} \, d\epsilon + \delta (D(\nu))^{2/r} \right] \\
&= K C_D^{1/r} \frac{\nu^{1-1/(r\eta)}}{1-1/(r \eta)} + KC_D^{2/r} \delta \nu^{-2/(r\eta)}  \\
&\leq \tilde K \delta^{(r\eta-1)/(r\eta+1)},
\end{align*}
for some $\tilde K < \infty$,
where the integral $\int_0^\nu (D(\epsilon))^{1/r} \, d\epsilon = \int_0^\nu C_D^{1/r} \epsilon^{-1/(r \eta)} \, d\epsilon$ converges because $r\eta > 1$.
Then, with $\delta = g(m) = C_N^\eta m^{-\eta}$, we obtain
\begin{align*}
    \sup_{n \in \mathbb N} \E\Bigg( \sup_{\tau_\eta(s,t) \leq g(m)} \Big| \frac{1}{\sqrt n} \sum_{j=1}^{n} (Z_j(s) - Z_j(t)) \Big|^2 \Bigg) 
\leq \tilde K^2 C_N^{2\eta(r\eta-1)/(r\eta+1)} m^{-2\eta(r\eta-1)/(r\eta+1)},
\end{align*}
so condition \eqref{eq:DehlingTightnessCondition} holds with $\kappa =2\eta(r\eta-1)/(r\eta+1) > 0$, and the assertion follows.
\end{proof}

\subsection{Auxiliary results for Theorems \ref{thm:null_limit} and \ref{thm:alt_limit}}

\subsubsection{Decomposition of $\Q_n$}

Define the centered curves as follows, 
$$
	\hat x_i(t) := X_i(t) - \hat \mu_X(t), \quad \hat y_i(t) := Y_i(t) - \hat \mu_Y(t),
$$
where $\hat \mu_X(t) = n^{-1} \sum_{i=1}^n X_i(t)$ and $\hat \mu_Y(t) = n^{-1} \sum_{i=1}^n Y_i(t)$.
Further with $\hat \alpha(t) = \hat \mu_Y(t) - \hat \gamma(t) \hat \mu_X(t)$ the OLS residuals satisfy, 
$$
	\hat \epsilon_i(t) = Y_i(t) - \hat \alpha(t) - \hat \gamma(t) X_i(t) = \hat y_i(t) - \hat \gamma(t) \hat x_i(t), \quad i=1, \ldots, n.
$$
We introduce the partial sum processes,
$$
	U_n(z,t) := \frac{1}{n}\sum_{i=1}^{\floor{nz}} \hat x_i(t) \hat y_i(t), \quad V_n(z,t) := \frac{1}{n} \sum_{i=1}^{\floor{nz}} (\hat x_i(t))^2.
$$
Then the OLS slope admits the representation given by,
$$
	\hat \gamma(t) = \frac{U_n(1,t)}{V_n(1,t)},
$$
which consequently leads to,
\begin{align}
\Q_n(z,t)
&=\frac{1}{\sqrt n}\sum_{i=1}^{\lfloor nz\rfloor}\big(X_i(t)-\hat\mu_X(t)\big)\hat\epsilon_i(t) \notag \\
&=\frac{1}{\sqrt n}\sum_{i=1}^{\lfloor nz\rfloor}\hat x_i(t)\big(\hat y_i(t)-\hat\gamma(t)\hat x_i(t)\big) \notag \\
&=\sqrt n\Big(U_n(z,t)-\frac{V_n(z,t)}{V_n(1,t)} \, U_n(1,t)\Big), \label{eq:Q-bridgestructure}
\end{align}
which reveals the bridge-type structure of $\Q_n(z,t)$.

\subsubsection{Decomposition of $U_n$}
We define the population-centered variables as follows, 
\begin{align*}
x_i(t)&:=X_i(t)-\mu_X(t),\qquad \mu_X(t):=\E(X_0(t)),\\
y_i(t)&:=Y_i(t)-\bar\mu_Y(t),\qquad \bar\mu_Y(t):=\frac1n\sum_{j=1}^n \E(Y_j(t)),
\end{align*}
and the mean estimation errors as follows, 
$$
\delta_X(t):=\hat\mu_X(t)-\mu_X(t),\qquad \delta_Y(t):=\hat\mu_Y(t)-\bar\mu_Y(t).
$$
Then $\hat x_i(t)=x_i(t)-\delta_X(t)$ and $\hat y_i(t)=y_i(t)-\delta_Y(t)$. Hence,
\begin{align}
U_n(z,t)
=\frac1n\sum_{i=1}^{\lfloor nz\rfloor}\hat x_i(t)\hat y_i(t) 
=\frac1n\sum_{i=1}^{\lfloor nz\rfloor}x_i(t)y_i(t) + S_{\text{cent},n}(z,t), \label{eq:Un_split_xy_H}
\end{align}
where it holds that, 
\begin{align} \label{eq:S-cent}
S_{\text{cent},n}(z,t)
:=-\delta_Y(t)\,\frac1n\sum_{i=1}^{\lfloor nz\rfloor}x_i(t)
-\delta_X(t)\,\frac1n\sum_{i=1}^{\lfloor nz\rfloor}y_i(t)
+\frac{\lfloor nz\rfloor}{n}\,\delta_X(t)\delta_Y(t). 
\end{align}
Moreover, we define the deterministic sample averages as follows,
$$
\bar \alpha_n(t):=\frac1n\sum_{j=1}^n \alpha(j/n,t),\qquad
\bar \gamma_n(t):=\frac1n\sum_{j=1}^n \gamma(j/n,t),
$$
and the centered coefficient functions as,
$$
\alpha_i^*(t):=\alpha(i/n,t)-\bar\alpha_n(t),\qquad
\gamma_i^*(t):=\gamma(i/n,t)-\bar\gamma_n(t).
$$
Using the model $Y_i(t)=\alpha(i/n,t)+\gamma(i/n,t)X_i(t)+\epsilon_i(t)$ and the identity,
$$
\bar\mu_Y(t)=\frac1n\sum_{j=1}^n \E(Y_j(t))=\bar\alpha_n(t)+\mu_X(t)\bar\gamma_n(t),
$$
we obtain the following,
$$
y_i(t)=\alpha_i^*(t)+\gamma(i/n,t)\,x_i(t)+\mu_X(t)\gamma_i^*(t)+\epsilon_i(t),
$$
and therefore it holds that,
$$
x_i(t)y_i(t)=x_i(t)\epsilon_i(t)+\gamma(i/n,t)\,x_i(t)^2+\big(\alpha_i^*(t)+\mu_X(t)\gamma_i^*(t)\big)x_i(t).
$$
Consequently, we have that,
\begin{align}
\frac1n\sum_{i=1}^{\lfloor nz\rfloor}x_i(t)y_i(t)
&=S_{\epsilon,n}(z,t)+S_{\gamma,n}(z,t)+S_{\text{coef},n}(z,t), \label{eq:xy_split}
\end{align}
where it holds that,
\begin{align*}
S_{\epsilon,n}(z,t) &:= \frac{1}{n}\sum_{i=1}^{\lfloor nz\rfloor} x_i(t)\epsilon_i(t),\\
S_{\gamma,n}(z,t) &:= \frac{1}{n}\sum_{i=1}^{\lfloor nz\rfloor} \gamma(i/n,t)\,x_i(t)^2,\\
S_{\text{coef},n}(z,t) &:= \frac{1}{n}\sum_{i=1}^{\lfloor nz\rfloor}\big(\alpha_i^*(t)+\mu_X(t)\gamma_i^*(t)\big)\,x_i(t).
\end{align*}
Combining \eqref{eq:Un_split_xy_H} and \eqref{eq:xy_split} yields the exact decomposition
\begin{align} \label{eq:U-decomp}
U_n(z,t)=S_{\epsilon,n}(z,t)+S_{\gamma,n}(z,t)+S_{\text{coef},n}(z,t)+S_{\text{cent},n}(z,t).
\end{align}

\subsubsection{Auxiliary asymptotics}

\begin{lemma} \label{lem:Brownianlimit}
Under Assumptions (A1)--(A3), on a possibly richer probability space, there exists a $C[0,1]$-valued Brownian motion $B(z,t)$ with $$\mathrm{Cov}(B(z,s),B(w,t))=\min\{z,w\}\,C(s,t),$$
where $C$ is given in \eqref{longrunCovariances}, such that, as $n \to \infty$,
$$
\sup_{z,t \in [0,1]} \bigg| \frac{1}{\sqrt n} \sum_{i=1}^{\floor{nz}} x_i(t) \epsilon_i(t) - B(z,t) \bigg| =  o_P(1).
$$
\end{lemma}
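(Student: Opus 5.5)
Apply Lemma \ref{lem:SIP} to $Z_i(t) := x_i(t)\epsilon_i(t) = (X_i(t)-\mu_X(t))\epsilon_i(t)$, then pass from the integer-time Brownian motion to the rescaled process by Brownian scaling.

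\emph{Step 1: centering and covariance.} By (A1), $\E(Z_i(t)) = \E\big(x_i(t)\E(\epsilon_i(t)|X_i)\big)=0$. Weak stationarity of $(Z_i)$ follows from (A3), since $Z_i$ is a fixed pointwise function of $(X_i,\epsilon_i)$. Strictly speaking, fourth-order stationarity is needed here; I will take it as implied by the stationarity in (A3). The mixing size of $(Z_i)$ is inherited from $(X_i,\epsilon_i)_{i\in\mathbb Z}$ by the lag-$0$ transformation remark. The long-run covariance in Lemma \ref{lem:SIP} then coincides with $C(s,t)$ in \eqref{longrunCovariances}.

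\emph{Step 2: moment and Hölder bounds with $q=p/2$.} For the sup-norm,
\[
\|Z_i\|_\infty \le (\|X_i\|_\infty + \|\mu_X\|_\infty)\|\epsilon_i\|_\infty .
\]
By Cauchy--Schwarz, $\E\|Z_i\|_\infty^{p/2}<\infty$ by (A2); note that $\|\mu_X\|_\infty\le \E\|X_0\|_\infty$. For the Hölder seminorm, the product rule gives
\[
[Z_i]_\eta \le [X_i]_\eta \|\epsilon_i\|_\infty + (\|X_i\|_\infty+\|\mu_X\|_\infty)[\epsilon_i]_\eta,
\]
using $[\mu_X]_\eta\le \E[X_0]_\eta$. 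Again by Cauchy--Schwarz, $\sup_i\E([Z_i]_\eta^{p/2})<\infty$. The exponent conditions need checking:
\begin{itemize}
\item $q=p/2>\max\{2,1/\eta\}$ holds exactly because $p>\max\{4,2/\eta\}$.
\item For $\eta>1/2$, the required size is $-q/(q-2) = -p/(p-4)$.
\item For $\eta\le 1/2$, the required size is $-q(1-\eta)/(q\eta-1) = -p(1-\eta)/(p\eta-2)$.
\end{itemize}
Both match (A3) exactly, which confirms that this choice of $q$ is the intended one.

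\emph{Step 3: rescaling.} Lemma \ref{lem:SIP} provides, on a suitable probability space, a $C[0,1]$-valued Brownian motion $W(k,t)$ with covariance $\min\{k,l\}C(s,t)$ and
\[
\sup_t\Big|\sum_{i\le \lfloor k\rfloor} Z_i(t)-W(k,t)\Big| = O(k^{1/2-\kappa}) \quad \text{a.s.}
\]
Define $B_n(z,t):=n^{-1/2}W(nz,t)$. By Brownian scaling, $B_n$ has the required covariance $\min\{z,w\}C(s,t)$ for every $n$. Almost surely there is a random $M$ such that the approximation error at $k$ is bounded by $M(1+k)^{1/2-\kappa}$ for all $k\ge0$; small $k$ is absorbed into $M$ because the partial sums and $W$ are finite there. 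Taking $k=nz\le n$ gives
\[
\sup_{z,t}\Big|n^{-1/2}\sum_{i\le\lfloor nz\rfloor}Z_i(t) - B_n(z,t)\Big| \le M n^{-1/2}(1+n)^{1/2-\kappa}\to0 \quad \text{a.s.}
\]
The discrepancy between $\lfloor nz\rfloor$ and $nz$ is harmless, since the lemma is stated with $\lfloor k\rfloor$ and continuous $k$.

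\emph{Main obstacle.} The main obstacle is that $B_n$ depends on $n$, whereas the statement names a single $B$. Because $B_n$ has the same law as $B$ for every $n$, this is the standard meaning of such approximations, and it suffices for the continuous-mapping consequences used later. I would state this interpretation explicitly. Apart from that, the delicate part is only the product Hölder bound and the exponent bookkeeping in Step 2.
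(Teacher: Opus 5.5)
Your proposal is correct and follows the paper's proof essentially step for step: apply Lemma~\ref{lem:SIP} to $Z_i=x_i\epsilon_i$ with $q=p/2$, check the sup-norm and Hölder moments via the product rule and Cauchy--Schwarz, inherit the mixing size, and rescale via $B(z,t)=n^{-1/2}\tilde B(nz,t)$; your remarks on the $n$-dependence of $B$, the random constant for small $k$, and fourth-order stationarity only make explicit what the paper leaves implicit. One small slip: in your Hölder bound the first term should be $([X_i]_\eta+[\mu_X]_\eta)\|\epsilon_i\|_\infty$ rather than $[X_i]_\eta\|\epsilon_i\|_\infty$, and your remark $[\mu_X]_\eta\le\E[X_0]_\eta$ is exactly what controls the extra term.
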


\begin{proof}
We apply Lemma~\ref{lem:SIP} to the $C[0,1]$-valued time series $Z_i:=x_i \epsilon_i$.
The exogeneity condition (A1) implies that $\E(Z_i(s)) = 0$ for all $s \in [0,1]$ and $i \in \mathbb Z$.
Note that, 
$$
[Z_i]_\eta=[x_i\epsilon_i]_\eta
\le [x_i]_\eta\|\epsilon_i\|_\infty+\|x_i\|_\infty[\epsilon_i]_\eta,
\qquad
\|Z_i\|_\infty = \|x_i \epsilon_i \|_\infty \le \|x_i\|_\infty\|\epsilon_i\|_\infty.
$$
Let $q=p/2>\max\{2,1/\eta\}$. Since $x_i=X_i-\mu_X$ with deterministic $\mu_X$ then Assumption (A2) implies that,
$\sup_{i\in\mathbb Z}\E([x_i]_\eta^p)<\infty$ and $\sup_{i\in\mathbb Z}\E(\|x_i\|_\infty^p)<\infty$.
Using $(a+b)^q\le 2^{q-1}(a^q+b^q)$, the Cauchy-Schwarz inequality, and Assumption (A2), we obtain, 
\begin{align*}
\E([Z_i]_\eta^q)
&\le 2^{q-1}\E\big([x_i]_\eta^q\|\epsilon_i\|_\infty^q\big)
   +2^{q-1}\E\big(\|x_i\|_\infty^q[\epsilon_i]_\eta^q\big) \\
&\le 2^{q-1}\E([x_i]_\eta^p)^{1/2}\E(\|\epsilon_i\|_\infty^p)^{1/2}
   +2^{q-1}\E(\|x_i\|_\infty^p)^{1/2}\E([\epsilon_i]_\eta^p)^{1/2} < \infty,
\end{align*}
and similarly $\E(\|Z_i\|_\infty^q)\le \E(\|x_i\|_\infty^p)^{1/2}\E(\|\epsilon_i\|_\infty^p)^{1/2} < \infty$.
Therefore,
$$
\sup_{i\in\mathbb Z}\E([Z_i]_\eta^q)<\infty,
\qquad
\sup_{i\in\mathbb Z}\E(\|Z_i\|_\infty^q)<\infty.
$$
Moreover, by (A3), the sequence $(Z_i)_{i \in \mathbb Z}$ is weakly stationary and $\beta$-mixing of size $-q/(q-2)$ if $\eta > 1/2$ and of size $-q(1-\eta)/(q\eta-1)$ if $\eta \leq 1/2$ because the mixing size is preserved under measurable transformations.

Then, Lemma \ref{lem:SIP} implies that on a possibly richer probability space there exists a $C[0,1]$-valued Brownian motion $\tilde B(k,t)$ with $\mathrm{Cov}(\tilde B(k,s), \tilde B(l,t)) = \min\{k,l\} C(s,t)$, where $k,l \geq 0$, $s,t \in [0,1]$, 
such that for some $\kappa > 0$,
$$
\sup_{t \in [0,1]} \left|\sum_{i=1}^{\lfloor k \rfloor} Z_i(t) - \tilde B(k,t)\right| = O(k^{1/2 - \kappa}), \qquad \text{a.s.} \quad (k \to \infty). 
$$
Define the rescaled Brownian motion as,
$$
    B(z,t) := \frac{1}{\sqrt n} \tilde B(nz,t),
$$
with the covariance structure given by,
$$
    \mathrm{Cov}(B(z_1,s), B(z_2,t)) = \frac{1}{n}\mathrm{Cov}(\tilde B(nz_1,s), \tilde B(nz_2,t)) = \min\{z_1, z_2 \} C(s,t),
$$
where it holds that,
$$
	C(s,t)=\sum_{h=-\infty}^{\infty}\mathrm{Cov}(Z_0(s),Z_h(t))
=\sum_{h=-\infty}^{\infty}\mathrm{Cov}((X_0(s)-\mu_X(s))\epsilon_0(s),(X_h(t)-\mu_X(t))\epsilon_h(t)).
$$
Then, it follows that,
\begin{align*}
	\sup_{z \in [0,1]} \sup_{t \in [0,1]} \bigg| \frac{1}{\sqrt n} \sum_{i=1}^{\floor{nz}} x_i(t) \epsilon_i(t) - B(z,t) \bigg|
	\leq \frac{1}{\sqrt n} \sup_{1 \leq k \leq n} \sup_{t \in [0,1]} \bigg|\sum_{i=1}^{k} Z_i(t) - \tilde B(k,t)  \bigg| = O(n^{-\kappa}),
\end{align*}
a.s., as $n \to \infty$, because $\sup_{k \leq n} k^{1/2-\kappa} = n^{1/2-\kappa}$, which implies,
$$
\sup_{z,t \in [0,1]} \bigg| \frac{1}{\sqrt n} \sum_{i=1}^{\floor{nz}} x_i(t) \epsilon_i(t) - B(z,t) \bigg| = O_P(n^{-\kappa}) = o_P(1).
$$

\end{proof}

\begin{lemma} \label{lem:samplemoments-asymptotics} Under Assumptions given in (A1)--(A3),  it holds as $n \to \infty$ that, 
	\begin{itemize}
		\item[(a)] $\sup_{z,t \in [0,1]} |n^{-1/2} \sum_{i=1}^{\floor{nz}} x_i(t) | = O_P(1)$
		\item[(b)] $\sup_{z,t \in [0,1]} |n^{-1/2} \sum_{i=1}^{\floor{nz}} \gamma(i/n,t) x_i(t) | = O_P(1)$
		\item[(c)] $\sup_{t \in [0,1]} |\delta_X(t)| = O_P(n^{-1/2})$
		\item[(d)] $\sup_{z,t \in [0,1]} |n^{-1/2} \sum_{i=1}^{\floor{nz}} \epsilon_i(t) | = O_P(1)$

		\item[(e)] $\sup_{z,t \in [0,1]} |n^{-1/2} \sum_{i=1}^{\floor{nz}} ((x_i(t))^2 - \sigma_X^2(t)) | = O_P(1).$
	\end{itemize}
\end{lemma}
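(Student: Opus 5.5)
The strategy is to reduce each item to an application of Lemma~\ref{lem:SIP}, which gives a strong invariance principle for centered, weakly stationary, $\beta$-mixing $C[0,1]$-valued sequences. Given such an approximation, $\sup_{z,t}|n^{-1/2}\sum_{i\le \floor{nz}} Z_i(t)|$ is bounded by $\sup_{z,t}|B(z,t)|$ plus an $o_P(1)$ remainder, exactly as in the proof of Lemma~\ref{lem:Brownianlimit}. The supremum of a $C[0,1]$-valued Brownian motion over the compact set $[0,1]^2$ is almost surely finite, hence $O_P(1)$. The proof therefore comes down to checking the hypotheses of Lemma~\ref{lem:SIP} for $Z_i = x_i$, $Z_i=\epsilon_i$ and $Z_i = x_i^2-\sigma_X^2$, and then handling the deterministic weight in (b).

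For (a) and (d), take $Z_i=x_i$ and $Z_i=\epsilon_i$. Both are centered: $x_i$ by construction, and $\epsilon_i$ by (A1), since $\E(\epsilon_i(t))=\E(\E(\epsilon_i(t)|X_i))=0$. Both are weakly stationary by (A3). For the moment conditions, (A2) gives $p$-th moments of the sup-norm and the Hölder seminorm, and $p>\max\{4,2/\eta\}$ makes $q=p$ admissible. Since the mixing sizes required in Lemma~\ref{lem:SIP} decrease in $q$, the sizes assumed in (A3), which were stated for $q=p/2$, are more than enough. Item (c) then follows from (a) at $z=1$, because $\delta_X(t)=n^{-1}\sum_{i=1}^n x_i(t)$.

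For (e), set $Z_i=x_i^2-\sigma_X^2$. This is centered and weakly stationary; strictly speaking, weak stationarity of the squares requires fourth-order stationarity, which I would read into (A3) or note as implicit. For the moments, $[x_i^2]_\eta\le 2\|x_i\|_\infty[x_i]_\eta$ and $\|x_i^2\|_\infty=\|x_i\|_\infty^2$, so Cauchy–Schwarz gives finite $q=p/2$ moments, as in Lemma~\ref{lem:Brownianlimit}. We also need $\sigma_X^2$ to be $\eta$-Hölder, which holds because $|\sigma_X^2(s)-\sigma_X^2(t)|\le \E|x_0(s)^2-x_0(t)^2|\le C|s-t|^\eta$. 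Mixing is preserved under measurable transforms, and the sizes in (A3) match $q=p/2$ exactly. Applying Lemma~\ref{lem:SIP} then gives the claim.

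The main obstacle is (b), because of the non-stationary deterministic weight $\gamma(i/n,t)$. My plan is summation by parts with $S_k(t)=\sum_{i\le k}x_i(t)$:
\[
\sum_{i=1}^{k}\gamma(i/n,t)x_i(t)=\gamma(k/n,t)S_k(t)-\sum_{i=1}^{k-1}\big(\gamma((i+1)/n,t)-\gamma(i/n,t)\big)S_i(t).
\]
We have $\sup_{z,t}|\gamma(z,t)|<\infty$ because $\gamma$ is piecewise Lipschitz in $z$ uniformly over $t$. Moreover, the total variation $\sum_i\sup_t|\gamma((i+1)/n,t)-\gamma(i/n,t)|$ is bounded uniformly in $n$: the Lipschitz pieces contribute $O(1)$ in total, and each of the finitely many jumps contributes a bounded amount. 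Hence the whole expression is bounded by a constant times $\max_{k\le n}\sup_t|S_k(t)|$, which is $O_P(\sqrt n)$ by (a). The delicate point is to verify that piecewise Lipschitz continuity uniform in $t$ really yields bounded variation in $z$ uniformly over $t$. If only finitely many breakpoints are allowed, this is immediate.
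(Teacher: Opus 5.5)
Your proposal is correct and follows the paper's proof essentially step for step: Lemma~\ref{lem:SIP} plus tightness of the rescaled Brownian motion (whose law does not depend on $n$) for (a), (d) and (e), evaluation at $z=1$ for (c), and Abel summation with uniformly bounded variation of $\gamma(\cdot,t)$ for (b). Your additional remarks are correct and address points the paper passes over silently: that $q=p$ is admissible in (a) and (d) because the required mixing size decreases in $q$, that $\sigma_X^2$ is $\eta$-H\"older, and that weak stationarity of $x_i^2-\sigma_X^2$ tacitly requires fourth-order stationarity.
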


\begin{proof} \ \\
\textit{Item (a).} 
Similarly to the proof of Lemma \ref{lem:Brownianlimit}, apply Lemma~\ref{lem:SIP} to the $C[0,1]$-valued time series $Z_i:=x_i$.
Assumptions (A2) and (A3) imply the required moment bounds and $\beta$-mixing properties for $(x_i)_{i\in\mathbb Z}$.
Hence, on a possibly richer probability space there exist $\kappa>0$ and a $C[0,1]$-valued Brownian motion $B_X(k,t)$ such that,
$$
\sup_{t\in[0,1]}\Big|\sum_{i=1}^{\lfloor k\rfloor} x_i(t)-B_X(k,t)\Big| = O(k^{1/2-\kappa}),\qquad \text{a.s.}\ (k\to\infty).
$$
Define the rescaled Brownian motion $\tilde B_X(z,t):=n^{-1/2}B_X(nz,t)$ for $z,t\in[0,1]$ with
$$
\sup_{z,t\in[0,1]}\Big|\frac{1}{\sqrt n}\sum_{i=1}^{\lfloor nz\rfloor}x_i(t)-\tilde B_X(z,t)\Big|
\le \frac{1}{\sqrt n}\sup_{1\le k\le n}\sup_{t\in[0,1]}
\Big|\sum_{i=1}^{k}x_i(t)-B_X(k,t)\Big| = o_P(1).
$$
Moreover, $\tilde B_X$ has continuous sample paths on the compact set $[0,1]^2$ and its law does not depend on $n$,
hence $\sup_{z,t \in [0,1]}|\tilde B_X(z,t)|=O_P(1)$. This implies (a). 
\newline
\textit{Item (b).}
Fix $t\in[0,1]$ and set $a_i(t):=\gamma(i/n,t)$ and $S_m(t):=\sum_{j=1}^m x_j(t)$.
By Abel's formula of summation by parts (see, e.g., Lemma 6 of \citealt{otto2022backward}), for $m\ge1$,
$$
\sum_{i=1}^{m} a_i(t)\,x_i(t)
= a_m(t)\,S_m(t) + \sum_{i=1}^{m-1}\big(a_i(t)-a_{i+1}(t)\big)\,S_i(t).
$$
Hence, with $m=\lfloor nz\rfloor$ and taking suprema,
\begin{align*}
&\sup_{z,t \in [0,1]}\Big|\frac{1}{\sqrt n}\sum_{i=1}^{\lfloor nz\rfloor}\gamma(i/n,t)x_i(t)\Big| \\
& \le \Big(\sup_{z,t \in [0,1]}|\gamma(z,t)|\Big)\,\sup_{z,t \in [0,1]}\Big|\frac{1}{\sqrt n}\sum_{i=1}^{\lfloor nz\rfloor}x_i(t)\Big| \\
&\qquad + \Big(\sup_{t \in [0,1]}\sum_{i=1}^{n-1}\big|\gamma(i/n,t)-\gamma((i+1)/n,t)\big|\Big)\,
\sup_{z,t \in [0,1]}\Big|\frac{1}{\sqrt n}\sum_{i=1}^{\lfloor nz\rfloor}x_i(t)\Big|.
\end{align*}
Because $\gamma$ is piecewise Lipschitz in $z$ uniformly over $t$ with finitely many breakpoints,
$$
\sup_{t\in[0,1]}\sum_{i=1}^{n-1}\big|\gamma(i/n,t)-\gamma((i+1)/n,t)\big|=O(1).
$$
Combining this with (a) yields (b). 
\newline
\textit{Item (c).}
Since $\delta_X(t)=n^{-1}\sum_{i=1}^n x_i(t)$, (a) gives
$$
\sup_{t\in[0,1]}|\delta_X(t)|
\le \frac{1}{\sqrt n}\sup_{z,t\in[0,1]}\Big|\frac{1}{\sqrt n}\sum_{i=1}^{\lfloor nz\rfloor}x_i(t)\Big|
=O_P(n^{-1/2}).
$$
\newline
\textit{Item (d).}
The proof is identical to (a), applied to $Z_i:=\epsilon_i$. Note that $\E(\epsilon_i(t))=0$ follows from $\E(\epsilon_i(t)| X_i)=0$.
\newline
\textit{Item (e).}
Apply Lemma~\ref{lem:SIP} to the centered $C[0,1]$-valued process, 
$$
Z_i(t):=x_i(t)^2-\E(x_0(t)^2)=x_i(t)^2-\sigma_X^2(t).
$$
Choose $q=p/2 > \max\{2,1/\eta\}$.
The required moment bounds follow from (A2) using
$\|x_i^2\|_\infty\le \|x_i\|_\infty^2$ and $[x_i^2]_\eta\le 2\|x_i\|_\infty [x_i]_\eta$,
and the $\beta$-mixing property is preserved under measurable transformations.
Then the same argument as in the proof of (a) yields that
$$
\sup_{z,t\in[0,1]}\Big|\frac{1}{\sqrt n}\sum_{i=1}^{\lfloor nz\rfloor}\big(x_i(t)^2-\sigma_X^2(t)\big)\Big|=O_P(1).
$$
\end{proof}

\subsection{Proof of Theorem \ref{thm:null_limit}}

\begin{proof}
Recall the decomposition \eqref{eq:Q-bridgestructure} and \eqref{eq:U-decomp}.
Under $\mathcal H$ we have $\gamma(i/n,t)=\gamma_0(t)$ and $\alpha(i/n,t)=\alpha_0(t)$. Hence $S_{\text{coef},n}(z,t)=0$.
Moreover, we have that,
$$
S_{\gamma,n}(z,t)=\gamma_0(t)\,\tilde V_n(z,t),
\qquad
\tilde V_n(z,t):=\frac1n\sum_{i=1}^{\lfloor nz\rfloor}x_i(t)^2.
$$
Given that $\hat x_i(t)=x_i(t)-\delta_X(t)$ then,
$$
V_n(z,t)
=\frac1n\sum_{i=1}^{\lfloor nz\rfloor}\hat x_i(t)^2
=\tilde V_n(z,t)-2\delta_X(t)\frac1n\sum_{i=1}^{\lfloor nz\rfloor}x_i(t)
+\frac{\lfloor nz\rfloor}{n}\delta_X(t)^2.
$$
We define $\Delta_n(z,t):=\tilde V_n(z,t)-V_n(z,t)$. By Lemma~\ref{lem:samplemoments-asymptotics}(a)+(c), we have that,
$$
\sup_{z,t\in[0,1]}|\Delta_n(z,t)|
\le 2\bigg( \sup_{t \in[0,1]}|\delta_X(t)| \bigg) \bigg( \sup_{z,t \in [0,1]}\Big|\frac1n\sum_{i=1}^{\lfloor nz\rfloor}x_i(t)\Big| \bigg)
+\sup_{t\in[0,1]}|\delta_X(t)|^2
=O_P(n^{-1}).
$$
Since $0\le V_n(z,t)\le V_n(1,t)$, we have $\sup_{z,t \in [0,1]}V_n(z,t)/V_n(1,t)\le 1$. Therefore it holds that,
\begin{align*}
&\sup_{z,t\in[0,1]}\Big|\sqrt n\Big(S_{\gamma,n}(z,t)-\frac{V_n(z,t)}{V_n(1,t)}S_{\gamma,n}(1,t)\Big)\Big|  \\
&= \sup_{z,t\in[0,1]}\Big|\sqrt n \gamma_0(t)\Big((V_n(z,t) + \Delta_n(z,t))-\frac{V_n(z,t)}{V_n(1,t)}(V_n(1,t) + \Delta_n(1,t))\Big)\Big| \\
&=\sup_{z,t\in[0,1]}\Big|\sqrt n\,\gamma_0(t)\Big(\Delta_n(z,t)-\frac{V_n(z,t)}{V_n(1,t)}\Delta_n(1,t)\Big)\Big|
= o_P(1),
\end{align*}
where we used boundedness of $\gamma_0$ and $\sup_{z,t \in [0,1]}|\Delta_n(z,t)|=O_P(n^{-1})$.
Therefore, using \eqref{eq:Q-bridgestructure}+\eqref{eq:U-decomp} it holds that,
\begin{align}
&\Q_n(z,t) \nonumber \\
&=\sqrt n\Big[\big(S_{\epsilon,n}(z,t)+S_{\text{cent},n}(z,t)\big)
-\frac{V_n(z,t)}{V_n(1,t)}\big(S_{\epsilon,n}(1,t)+S_{\text{cent},n}(1,t)\big)\Big] + o_P(1), \label{eq:Q-null}
\end{align}
uniformly in $(z,t)$. By Lemma \ref{lem:Brownianlimit} it holds that,
\begin{align}\label{eq:Q-null-1}
\sup_{z,t\in[0,1]}\big|\sqrt n\,S_{\epsilon,n}(z,t)-B(z,t)\big|=o_P(1),
\end{align}
and in particular $\sup_{t\in[0,1]}|\sqrt n\,S_{\epsilon,n}(1,t)|=O_P(1)$. Next, write $\hat x_i(t)^2=x_i(t)^2-2\delta_X(t)x_i(t)+\delta_X(t)^2$. Using
Lemma \ref{lem:samplemoments-asymptotics}(a) and (c) we obtain that,
\begin{align}
\sup_{z,t\in[0,1]}\Big|V_n(z,t)-\frac1n\sum_{i=1}^{\lfloor nz\rfloor}x_i(t)^2\Big|=O_P(n^{-1}).
\end{align}
Together with Lemma \ref{lem:samplemoments-asymptotics}(e) this yields,
\begin{align}
\sup_{z,t\in[0,1]}\big|V_n(z,t)-z\sigma_X^2(t)\big|=O_P(n^{-1/2}),
\qquad
\sup_{t\in[0,1]}\big|V_n(1,t)-\sigma_X^2(t)\big|=O_P(n^{-1/2}).
\end{align}
Since $\inf_{t\in[0,1]}\sigma_X^2(t)>0$ by (A1), we have $\sup_{t\in[0,1]}|V_n(1,t)^{-1}|=O_P(1)$ and hence we have 
\begin{align} 
	&\sup_{z,t \in [0,1]}\Big|\frac{V_n(z,t)}{V_n(1,t)}-z\Big| \notag \\
&\quad \le \sup_{z,t \in [0,1]}\frac{|V_n(z,t)-zV_n(1,t)|}{V_n(1,t)}  \notag \\
&\quad \le \Big(\sup_{t\in [0,1]} V_n(1,t)^{-1}\Big)\Big(\sup_{z,t\in [0,1]}|V_n(z,t)-z\sigma_X^2(t)|
+\sup_{t\in [0,1]}|V_n(1,t)-\sigma_X^2(t)|\Big)  \notag \\
&\quad = o_P(1).  \label{eq:Q-null-2}
\end{align}
Finally, under $\mathcal H$ we have $y_i(t)=\gamma_0(t)x_i(t)+\epsilon_i(t)$ and
$\delta_Y(t)=\gamma_0(t)\delta_X(t)+n^{-1}\sum_{i=1}^n\epsilon_i(t)$, so that Lemma
\ref{lem:samplemoments-asymptotics}(a)+(c)+(d) implies,
\begin{align}\label{eq:Q-null-3}
\sup_{z,t\in[0,1]}|S_{\text{cent},n}(z,t)|=O_P(n^{-1}).
\end{align}
Combining \eqref{eq:Q-null}--\eqref{eq:Q-null-3}, we then have,
\begin{align*}
\Q_n(z,t)-\big(B(z,t)-zB(1,t)\big)
&=\big(\sqrt n S_{\epsilon,n}(z,t)-B(z,t)\big)
-z\big(\sqrt n S_{\epsilon,n}(1,t)-B(1,t)\big) \\
&\quad -\Big(\frac{V_n(z,t)}{V_n(1,t)}-z\Big)\sqrt n S_{\epsilon,n}(1,t) \\
&\quad +\sqrt n\Big(S_{\text{cent},n}(z,t)-\frac{V_n(z,t)}{V_n(1,t)}S_{\text{cent},n}(1,t)\Big).
\end{align*}
Taking suprema over $(z,t)\in[0,1]^2$, the first two terms are $o_P(1)$ by \eqref{eq:Q-null-1},
the third is $o_P(1)$ by \eqref{eq:Q-null-2} and $\sup_{t \in [0,1]}|\sqrt n S_{\epsilon,n}(1,t)|=O_P(1)$,
and the last is $o_P(1)$ by equations \eqref{eq:Q-null-2} and \eqref{eq:Q-null-3}. This proves that,
$$
\sup_{z,t\in[0,1]}\Big|\Q_n(z,t)-\big(B(z,t)-zB(1,t)\big)\Big|=o_P(1).
$$
\end{proof}

\subsection{Proof of Theorem \ref{thm:alt_limit}}

\begin{proof}
Recall the decomposition \eqref{eq:Q-bridgestructure} and \eqref{eq:U-decomp}.
We start with proving the asymptotic behavior of the three partial sum processes $S_{\epsilon,n}$, $S_{\text{coef},n}$ and $S_{\text{cen },n}$ from \eqref{eq:U-decomp}.
For $S_{\epsilon,n}$, Lemma~\ref{lem:Brownianlimit} yields
$$
\sup_{z,t\in[0,1]}\big|\sqrt n\,S_{\epsilon,n}(z,t)\big|=O_P(1).
$$
For $S_{\text{coef},n}$,
since $b(z,t):=\alpha(z,t)-\bar\alpha_n(t)+\mu_X(t)\big(\gamma(z,t)-\bar\gamma_n(t)\big)$
is continuous in $t$ and piecewise Lipschitz in $z$ uniformly over $t$, $\sup_{t\in[0,1]}\sum_{i=1}^{n-1}|b((i+1)/n,t)-b(i/n,t)|=O(1)$, and $\sup_{z,t \in [0,1]}|b(z,t)|<\infty$.
Applying Abel's summation by parts as in the proof of Lemma~\ref{lem:samplemoments-asymptotics}(b), with $a_i(t)=b_i(t)$ and taking partial sums of $x_i(t)$, and using Lemma~\ref{lem:samplemoments-asymptotics}(a) together
imply that $\sup_{z,t\in[0,1]}|\frac{1}{\sqrt n}\sum_{i=1}^{\lfloor nz\rfloor}b(i/n,t)\,x_i(t)|=O_P(1)$. Hence,
$$
\sup_{z,t \in [0,1]}\big|\sqrt n\,S_{\text{coef},n}(z,t)\big|=O_P(1).
$$
For $S_{\text{cent},n}$, note that $Y_i(t)-\E(Y_i(t))=\gamma(i/n,t)\,x_i(t)+\epsilon_i(t)$, so
$$
\delta_Y(t)=\hat\mu_Y(t)-\bar\mu_Y(t)=\frac1n\sum_{i=1}^n\big(Y_i(t)-\E(Y_i(t))\big)
=\frac1n\sum_{i=1}^n\gamma(i/n,t)\,x_i(t)+\frac1n\sum_{i=1}^n\epsilon_i(t).
$$
Therefore, by Lemma~\ref{lem:samplemoments-asymptotics}(b)+(d), $\sup_{t\in[0,1]}|\delta_Y(t)|=O_P(n^{-1/2})$.
Using this together with Lemma~\ref{lem:samplemoments-asymptotics}(a)+(c)+(d) and 
$y_i(t)=\alpha_i^*(t)+\mu_X(t)\gamma_i^*(t)+\gamma(i/n,t)x_i(t)+\epsilon_i(t)$ yields
$$
\sup_{z,t\in[0,1]}\Big|\frac1n\sum_{i=1}^{\lfloor nz\rfloor}y_i(t)\Big|=O_P(1),
\qquad
\sup_{z,t\in[0,1]}\Big|\frac1n\sum_{i=1}^{\lfloor nz\rfloor}x_i(t)\Big|=O_P(n^{-1/2}),
$$
and hence from \eqref{eq:S-cent},
$$
\sup_{z,t\in[0,1]}|\sqrt n\,S_{\text{cent},n}(z,t)|=O_P(1).
$$
Combining the bounds gives
$$
	\sup_{z,t\in[0,1]}\Big|\sqrt n\big(S_{\epsilon,n}(z,t)+S_{\text{coef},n}(z,t)+S_{\text{cent},n}(z,t)\big)\Big| = O_P(1)
$$
and, by \eqref{eq:U-decomp},
\begin{align} \label{eq:U-approx-alt}
	\sup_{z,t\in[0,1]} \sqrt n \, \Big| U_n(z,t) - S_{\gamma,n}(z,t) \Big| = O_P(1).
\end{align}
For $S_{\gamma,n}$, write
$$
S_{\gamma,n}(z,t)=\frac1n\sum_{i=1}^{\lfloor nz\rfloor}\gamma(i/n,t)\,x_i(t)^2
=\sigma_X^2(t)\,\frac1n\sum_{i=1}^{\lfloor nz\rfloor}\gamma(i/n,t)
+\frac1n\sum_{i=1}^{\lfloor nz\rfloor}\gamma(i/n,t)\big(x_i(t)^2-\sigma_X^2(t)\big).
$$
Since $\gamma(\cdot,t)$ is piecewise Lipschitz uniformly over $t$, the first term is a Riemann sum and
$$
\sup_{z,t\in[0,1]}\Big|\frac1n\sum_{i=1}^{\lfloor nz\rfloor}\gamma(i/n,t)-\int_0^z\gamma(u,t)\,du\Big|=O(n^{-1}).
$$
For the second term, apply Abel summation as in Lemma~\ref{lem:samplemoments-asymptotics}(b) with
$a_i(t)=\gamma(i/n,t)$ and taking partial sums of $(x_i(t)^2-\sigma_X^2(t))$, and useing Lemma~\ref{lem:samplemoments-asymptotics}(e) together imply
$$
\sup_{z,t\in[0,1]}\Big|\frac{1}{\sqrt n}\sum_{i=1}^{\lfloor nz\rfloor}\gamma(i/n,t)\big(x_i(t)^2-\sigma_X^2(t)\big)\Big|=O_P(1).
$$
Together the above steps yield that
$$
\sup_{z,t\in[0,1]}\Big|\sqrt n\Big(S_{\gamma,n}(z,t)-\sigma_X^2(t)\int_0^z\gamma(u,t)\,du\Big)\Big|=O_P(1).
$$
and with \eqref{eq:U-approx-alt},
\begin{align}\label{eq:U-approx-alt1}
\sup_{z,t\in[0,1]}\Big|\sqrt n\Big(U_n(z,t)-\sigma_X^2(t)\int_0^z\gamma(u,t)\,du\Big)\Big|=O_P(1).
\end{align}
Recall from \eqref{eq:Q-bridgestructure} that
$$
\Q_n(z,t)=\sqrt n\Big(U_n(z,t)-\frac{V_n(z,t)}{V_n(1,t)}U_n(1,t)\Big).
$$
Moreover, by the same argument as in \eqref{eq:Q-null-2},
$$
\sup_{z,t \in [0,1]} \sqrt n\, \Big|\frac{V_n(z,t)}{V_n(1,t)}-z\Big|=O_P(1)
$$
also holds under $\mathcal K$.
Therefore, with \eqref{eq:U-approx-alt1},
\begin{align*}
\Q_n(z,t)
&=\sqrt n\,\sigma_X^2(t)\Big(\int_0^z\gamma(u,t)\,du-\frac{V_n(z,t)}{V_n(1,t)}\int_0^1\gamma(u,t)\,du\Big)+O_P(1) \\
&=\sqrt n\,\sigma_X^2(t)\Big(\int_0^z\gamma(u,t)\,du-z\int_0^1\gamma(u,t)\,du\Big)+O_P(1),
\end{align*}
uniformly over $(z,t)\in[0,1]^2$.
Finally, the assertion follows since,
$$
\int_0^z\gamma(u,t)\,du-z\int_0^1\gamma(u,t)\,du=\int_0^z\gamma^*(u,t)\,du,
\qquad
\gamma^*(z,t)=\gamma(z,t)-\int_0^1\gamma(u,t)\,du.
$$
\end{proof}

\newpage

\bibliography{flrBib}

\end{document}